\newtheorem{theorem}{Theorem}
\newtheorem{remark}{Remark}
\newtheorem{lemma}{Lemma}
\newtheorem{definition}{Definition}
\DeclareMathOperator*{\argmin}{arg\,min} 
\newcommand{\diag}{\mathop{\rm diag}\nolimits}
\newcommand{\rr}{{\mathbb R}}
\newcommand{\ba}[1]{\begin{array}{#1}}
\newcommand{\ea}{\end{array}}
\newcommand*{\pdot}{\mathbin{\scalerel*{\boldsymbol\odot}{\circ}}}
\newcommand{\tikzxmark}{%
\tikz[scale=0.23] {
    \draw[line width=0.7,line cap=round] (0,0) to [bend left=6] (1,1);
    \draw[line width=0.7,line cap=round] (0.2,0.95) to [bend right=3] (0.8,0.05);
}}
\newcommand{\tikzcmark}{%
\tikz[scale=0.23] {
    \draw[line width=0.7,line cap=round] (0.25,0) to [bend left=10] (1,1);
    \draw[line width=0.8,line cap=round] (0,0.35) to [bend right=1] (0.23,0);
}}
\begin{document}
\title{\LARGE \bf
EIQP: Execution-time-certified and Infeasibility-detecting QP Solver}
\author{Liang Wu$^{1}$, Wei Xiao$^{1}$, Richard D. Braatz$^{1}$, Fellow, IEEE%
\thanks{$^{1}$Massachusetts Institute of Technology, Cambridge, MA 02139, USA, {\tt\small \{liangwu,weixy,braatz\}@mit.edu}\\
}
}

\maketitle

\begin{abstract}
Solving real-time quadratic programming (QP) is a ubiquitous task in control engineering, such as in model predictive control and control barrier function-based QP. In such real-time scenarios, certifying that the employed QP algorithm can either return a solution within a predefined level of optimality or detect QP infeasibility before the predefined sampling time is a pressing requirement. This article considers convex QP (including linear programming) and adopts its homogeneous formulation to achieve infeasibility detection. Exploiting this homogeneous formulation, this article proposes a novel infeasible interior-point method (IPM) algorithm with the best theoretical $O(\sqrt{n})$ iteration complexity that feasible IPM algorithms enjoy. The iteration complexity is proved to be \textit{exact} (rather than an upper bound), \textit{simple to calculate}, and \textit{data independent}, with the value $\left\lceil\frac{\log(\frac{n+1}{\epsilon})}{-\log(1-\frac{0.414213}{\sqrt{n+1}})}\right\rceil$ (where $n$ and $\epsilon$ denote the number of constraints and the predefined optimality level, respectively), making it appealing to certify the execution time of online time-varying convex QPs. The proposed algorithm is simple to implement without requiring a line search procedure (uses the full Newton step), and its C-code implementation (offering MATLAB, Julia, and Python interfaces) and numerical examples are publicly available at \url{https://github.com/liangwu2019/EIQP}.
\end{abstract}

\begin{IEEEkeywords}
Quadratic programming, execution time certificate, infeasibility detection, model predictive control, control barrier function.
\end{IEEEkeywords}

\section{Introduction}
Like solving a linear system, solving a quadratic programming (QP) problem has also been ubiquitous and extensively used including in finance, machine learning, operations, computer vision, energy, transportation, bioinformatics, signal processing, robotics, and control. For example, \textit{i)} model predictive control (MPC), which formulates a receding horizon control problem as a QP that is solved at each sampling time, has been widely spread in the industry for controlling multivariable systems subject to constraints \cite{qin2003survey} and \textit{ii)} CLF-CBF-QP \cite{ames2016control}, a control framework that combines stability (via control Lyapunov function) and safety (via control barrier functions) into a real-time QP, has been widely used in autonomous systems. Specifically regarding applications, \textit{i)} the quadrotor maneuvering field has a well-known QP-based framework \cite{mellinger2011minimum} for generating minimum snap trajectories with smooth transitions through waypoints while satisfying constraints on velocities, accelerations, and inputs; \textit{ii)} the robotics field often uses QP as a unified framework for task-space control paradigms \cite{escande2014hierarchical,bouyarmane2018quadratic}. All these QPs fall into the categories of \textit{online}, as they are required to be executed in real time, in contrast to the \textit{offline} QPs, which do not have this requirement.

Solving \textit{real-time} QPs poses a pressing and open challenge for algorithms: \textit{how to guarantee that the QP algorithm can return the optimal solution or detect infeasibility before the predefined sampling (or feedback) time}. This is called the \textit{execution time certificate} problem. Without the \textit{execution time certificate}, control signals (the backup control signal if detecting QP infeasibility) fail to return on time, which renders the system open-loop and potentially leads to safety concerns, especially in safety-critical process systems.

Certifying the execution time of QP algorithms can be reduced to certifying the number of iterations of QP algorithms if each iteration performs a fixed number of floating-point operations ([flops]). In fact, nearly all QP solvers have default settings that specify a maximum number of iterations, seemingly addressing the worst-case \textit{execution time certificate} problem. However, this is a \textit{cheating} approach for addressing \textit{execution time certificate} problem, because it can not guarantee that all returned solutions have \textit{the same level of optimality}.

An execution-time-certified QP algorithm should precisely be defined as one that \textit{either returns an optimality-certified solution (within a predefined level of optimality) or detects QP infeasibility before the predefined sampling time}. First of all, the stability and safety guarantee of MPC and CLF-CBF-QP control frameworks both assume that the QP at each sampling time is exactly solved, in other words, the QP solution should be \textit{optimality-certified}. Second, the \textit{optimality-certified} QP solution is also pursued in practice. When the plant suffers a sudden disturbance, MPC often requires that the QP algorithm runs more iterations at this time instant to ensure good performance. For example, in the case of an autonomous car or drone making a sharp turn, if the MPC-based QP does not run enough iterations, an autonomous car or drone may fail and deviate from the desired trajectory. Recently, the \textit{execution time certificate} problem has attracted significant scholarly interest and remains a vibrant research area, especially within the context of MPC \cite{richter2011computational, giselsson2012execution, patrinos2013accelerated, cimini2017exact, arnstrom2019exact, cimini2019complexity,arnstrom2020complexity, arnstrom2021unifying, okawa2021linear, wu2023direct, wu2024time, wu2024execution,wu2024parallel}.

\subsection{Related work}
Current QP algorithms tailored for efficiently solving MPC include first-order, active set, and interior point methods \cite{stellato2020osqp, wu2023simple, wu2023construction,ferreau2014qpoases,wang2009fast}. The iteration complexity analysis of first-order methods is provided in \cite{richter2011computational, giselsson2012execution, patrinos2013accelerated} for linear MPC problems, but their iteration complexity result is dependent on the QP problem data such as the Hessian matrix. Their \textit{data-dependent} iteration complexity result cannot guarantee the time-invariant number of iterations for MPC-based QP with time-varying data such as  Real-Time Iteration (RTI)-based nonlinear MPC \cite{gros2020linear}.

Although active set methods often have good performance in small- to medium-scale problems in practice, in theory, they can have an exponential number of iterations in the worst case \cite{klee1972good}. Refs.\  \cite{cimini2017exact, arnstrom2019exact, cimini2019complexity,arnstrom2020complexity, arnstrom2021unifying} develop a computationally complicated and expensive (thus offline) worst-case partial enumeration technique, to certify the worst-case number of iterations for active set methods. Their iteration complexity results are also \textit{data-dependent} and cannot be used in QPs that have time-varying data. This also happens in work \cite{okawa2021linear}, which proposes an $N$-step algorithm for input-constrained MPC problems, and its worst-case iteration is the problem dimension $N$, but assuming that a \textit{modified N-step vector} is given. The authors in \cite{okawa2021linear} propose solving a linear program (LP) to find this \textit{modified N-step vector}, dependent on QP data, making it not suitable for real-time QP that have time-varying data.

Interior point methods (IPMs) can be categorized as being \textit{practical} (such as Mehrotra predictor-corrector IPMs \cite{mehrotra1992implementation}) or \textit{theoretical}. Thanks to the super-fast convergence speed ($O(\log(n))$ iteration complexity, with likely $<50$ iterations), the Mehrotra predictor-corrector IPMs are the foundation for most interior point software such as \cite{zanelli2020forces}, but are heuristic and may diverge for some examples \cite[see p.\ 411]{nocedal2006numerical}, \cite{cartis2009some}, without certified global convergence proof. \textit{Theoretical} IPMs can be classified as being infeasible or feasible, which generally have certified $O(n)$ and $O(\sqrt{n})$ iteration complexity \cite{wright1997primal,ye2011interior}, respectively. However, the faster iteration complexity of feasible IPMs comes at the cost of requiring a strictly feasible initial point lying in the neighborhood of the central path. Finding this strictly feasible initial point for a general QP requires solving an LP. To avoid this limit, our previous work \cite{wu2023direct} for the first time proposed a \textit{cost-free} initialization strategy for feasible IPMs by exploiting the structure of the considered box-constrained QP (arising from input-constrained MPC). One interesting feature of our algorithm \cite{wu2023direct} is that its iteration complexity is the \textit{exact} value: $
\!\left\lceil\frac{\log(\frac{2n}{\epsilon})}{-2\log(\frac{\sqrt{2n}}{\sqrt{2n}+\sqrt{2}-1})}\right\rceil \! + 1
$, rather than an upper bound as provided in previous works. This \textit{exact} iteration complexity is \textit{data independent} (only \textit{dimension dependent}), so our algorithm can certify the iteration of QP with time-varying data \cite{wu2024execution}. Although our algorithm \cite{wu2023direct} is only valid for box-constrained QP, Ref.\ \cite{wu2024parallel} then extends it to general strictly convex QPs via an $\ell_1$-penalty soft constraint which is transformed to box-constrained QP via duality theory. This soft constraint approach can offer a viable solution when the strictly convex QP is infeasible, such as in cases where an MPC encounters unknown disturbances or modeling errors that render the MPC-based QP problem infeasible.

However, some scenarios require real-time QP solvers to have the ability to detect infeasibility. For example, the soft-constrained solution is always approximate and hard to guarantee the stability of MPC. In this case, the backup control law can be active when detecting the potential infeasibility of real-time QP. Naturally, detecting infeasibility should also be completed within the certified execution time; all the aforementioned iteration complexity works \cite{richter2011computational, giselsson2012execution, patrinos2013accelerated, cimini2017exact, arnstrom2019exact, cimini2019complexity,arnstrom2020complexity, arnstrom2021unifying, okawa2021linear, wu2023direct, wu2024time, wu2024execution} do not have infeasibility-detection capability and either assume the QP is feasible or restrict their consideration to box-constrained QPs. 

Regarding infeasibility detection in QP, \textit{i)} the first-order method, specifically the alternating direction method of multipliers, has been reported to possess the infeasibility-detection capability in \cite{banjac2019infeasibility} and its corresponding solver OSQP \cite{stellato2020osqp}, but lacks the iteration complexity analysis; \textit{ii)} the active set method, combined with the nonnegative least-squares (NNLS) formulation, has infeasibility-detection capability in \cite{bemporad2015quadratic} and can be combined with the offline (expensive) iteration complexity certification procedure \cite{ arnstrom2020exact} to determine the worst-case iterative behavior, but this NNLS formulation is only limited to strictly convex QPs; \textit{iii)} IPM, combined with the homogeneous slacked QP formulation \cite{raghunathan2021homogeneous}, possesses the infeasibility-detection capability but is limited to strictly convex QPs and lacks a certified iteration complexity for the employed IPM algorithm. IPMs based on the homogeneous self-dual (HSD) formulation \cite{ye1994nl, xu1996simplified} are well known to be capable of detecting the infeasibility of LPs. Furthermore, Ref.\ \cite{ye1997homogeneous} extended the HSD formulation to the linear complementarity problem (LCP) (more general than LP), but their proposed IPM has \textit{data-dependent} iteration complexity $O(\sqrt{n}L)$ (where $L$ is the binary data length of LCP problem data). Ref.\ \cite{andersen1999homogeneous} proposed a homogeneous formulation for nonlinear monotone complementarity problems to support infeasibility detection, and provided iteration complexity analysis for the employed infeasible IPM, but its iteration complexity result is too conservative and  \textit{data dependent}, specifically depending on the scaled Lipschitz constants of the nonlinear monotone mapping. 

\subsection{Contributions}
This article extends the homogeneous formulation from \cite{andersen1999homogeneous} to detect the infeasibility of convex QPs and proposes a novel infeasible IPM algorithm to solve convex QPs. Our novel contributions are the following:
\begin{itemize}
    \item[1)] proves that this homogeneous formulation enables our proposed infeasible IPM algorithm to achieve the best theoretical $O(\sqrt{n})$ iteration complexity, that feasible IPM algorithms enjoy, so there is no need to find a strictly feasible initial point.
    
    \item[2)] proves that the iteration complexity of our algorithm is \textit{exact} (rather than the upper bound), \textit{simple to calculate}, and \textit{data independent}, with the value
    \[
\left\lceil\frac{\log(\frac{n+1}{\epsilon})}{-\log(1-\frac{0.414213}{\sqrt{n+1}})}\right\rceil
    \]
    (where $n$ and $\epsilon$ denote the number of constraints and the predefined optimality level, respectively), making it appealing to certify the execution time of online time-varying convex QPs (including LPs).
    \item[3)] shows that our algorithm is \textit{simple} to implement as it does not require a line-search procedure. Its C-code implementation is just one single C file, making it easy to integrate with other software and appealing when deployed in embedded production systems. Its MATLAB, Julia, and Python interfaces and numerical examples are publicly available at \url{https://github.com/liangwu2019/EIQP}. Moreover, our algorithm is \textit{parameter free} as there are no algorithm parameters, avoiding the tuning work with algorithm parameters.
\end{itemize}

To the best of the authors' knowledge, this article for the first time provides a convex QP solver with \textit{exact} and \textit{data-independent} \textbf{E}\textit{xecution time certificate} and \textbf{I}nfeasibility detection, thus referred to as EIQP.

\subsection{Notations}
$\left\lceil x\right\rceil$ maps $x$ to the least integer greater than or equal to $x$. $\mathbb{R}^n$ denotes the space of $n$-dimensional real vectors, $\mathbb{R}^n_{+}$ and $\mathbb{R}^n_{++}$ are the set of all non-negative and positive vectors of $\mathbb{R}^n$, respectively. $Q \succ 0$ ($Q\succeq 0$) denotes positive definiteness (semi-definiteness) of a square matrix $Q$, $Q^{\top}$ (or $z^{\top}$) denotes the transpose of matrix $Q$ (or vector $z$). $\min(z,y)$ denotes the minimum value between scalar $z$ and $y$. Given two arbitrary vectors $z,y \in\mathbb{R}^n_{++}$, their Hadamard product is $z\pdot y = (z_1y_1,z_2y_2,\cdots{},z_ny_n)^\top$, $\frac{z}{y}=z\pdot \big(y^{-1}\big)=\big(\frac{z_1}{y_1},\frac{z_2}{y_2},\cdots{},\frac{z_n}{y_n}
\big)^{\!\top}$. For $z,y\in\mathbb{R}^n$, let $\mathrm{col}(z,y)=[z^{\top},y^{\top}]^{\top}$ and $\max(z,y)=(\max(z_1,y_1),\max(z_2,y_2),\cdots{},\max(z_n,y_n))^\top$. For a vector $z\in\mathbb{R}^n$, its Euclidean norm is $\|z\|=\sqrt{z_1^2+z_2^2+\cdots+z_n^2}$, $\|z\|_1=\sum_{i=1}^{n}|z_i|$, $\|z\|_{\infty}=\max_i |z_i|$, $\diag(z):\mathbb{R}^n\rightarrow\mathbb{R}^{n\times n}$ maps an vector $z$ to its corresponding diagonal matrix, $\sqrt{z}=\left(\sqrt{z_1}, \sqrt{z_2},\cdots{},\sqrt{z_n}\right)^{\!\top}$ and $\frac{1}{z}=(\frac{1}{z_1},\frac{1}{z_2},\cdots{},\frac{1}{z_n})^{\top}$.

\section{Problem formulations}
This article considers the convex QP,
\begin{equation}\label{eqn_QP}
    \begin{aligned}
        \min_z&~ \frac{1}{2}z^\top Q(t) z + c(t)^\top z \\
        \mathrm{s.t.}&~ A(t)z\geq b(t),\\
        &~z\geq 0
    \end{aligned}
\end{equation}
where the data $Q(t)=Q(t)^\top\in\rr^{n_z\times n_z}\succeq0$, $c(t)\in\rr^{n_z}$, $A(t)\in\rr^{n_b\times n_z}$, $b(t)\in\rr^{n_b}$, are all often time-varying in real-time applications. Problem\ \eqref{eqn_QP} is common such as in real-time MPC with state and input constraints and CLF-CBF-QP with input bounds, see subsections \ref{subsection_MPC} and \ref{subsection_CBF_QP}. 
For example, a typical MPC or CLF-CBF-QP problem, $\min_z \frac{1}{2}z^\top Q(t)z+c(t)^\top z,\text{s.t.~} lb(t)\leq z\leq ub(t), A(t)z\leq b(t)$, (even its soft-constrained formulation), can be easily formulated as Problem \eqref{eqn_QP} by letting $z\leftarrow z-lb(t)$, $c(t)\leftarrow c(t)+Q(t)lb(t)$, $A(t)\leftarrow[-I;-A(t)]$, and $b(t)\leftarrow[lb(t)-ub(t);A(t)lb(t)-b(t)]$. 
A standard QP formulation can be reformulated as \eqref{eqn_QP}.
\begin{remark}
Consider the standard QP formulation,
\begin{equation}\label{eqn_standard_QP}
    \begin{aligned}
        \min_z&~ \frac{1}{2}z^\top Q(t) z + c(t)^\top z \\
        \mathrm{s.t.}&~ A(t)z\geq b(t)
\end{aligned}
\end{equation}
which does not have the additional non-negative constraints $z\geq0$. It is trivial to transform this QP to  \eqref{eqn_QP} by introducing $\Bar{z}\triangleq \mathrm{col}(z^+,z^-)$, where $z^+,z^-$ denote the positive and negative part of $z$, respectively. That is, $z=z^+ - z^-$ and $z^+_i z^-_i=0, i=1,\cdots{},n_z$. Based on this representation, a standard QP formulation can be reformulated as
\begin{equation}\label{eqn_transformed_QP}
\begin{aligned}
    \min_{\Bar{z}}&~\frac{1}{2}\Bar{z}^\top\Bar{Q}(t)\Bar{z}+\Bar{c}(t)^\top\Bar{z}\\
    \mathrm{s.t.}&~\Bar{A}(t)\Bar{z}\geq b(t)\\
     &~\Bar{z}\geq 0
\end{aligned}    
\end{equation}
where $\Bar{A}(t)=\left[A(t),-A(t)\right]\!,$
\[ 
\Bar{Q}(t)=\left[\begin{array}{cc}
    Q(t) & -Q(t) \\
     -Q(t) & Q(t)
\end{array}\right]\!, \text{ and }\Bar{c}(t)=\left[\begin{array}{c}
     c(t)  \\
     -c(t) 
\end{array}\right].
\]
\end{remark}
Note that the constraint  $z^+_i z^-_i=0, i=1,\cdots{},n_z$ is eliminated since the following Lemma holds,

\begin{lemma}
    Let $\Bar{z}=\mathrm{col}(z^+,z^-)$ be the minimizer of \eqref{eqn_transformed_QP}, then it always satisfies $z^+_i z^-_i=0, i=1,\cdots{},n_z$.
\end{lemma}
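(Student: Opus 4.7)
The plan is to argue by contradiction via an explicit shift that leaves both feasibility and objective value unchanged while strictly reducing $\|\bar{z}\|_1$, so that any minimizer of minimum $\ell_1$-norm (and in particular the canonical minimizer produced by the interior-point algorithm of the paper) cannot violate the complementarity $z^+_i z^-_i = 0$.

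First I would suppose, for contradiction, that some index $i$ satisfies $z^+_i>0$ and $z^-_i>0$, put $\delta := \min(z^+_i, z^-_i) > 0$, and define the shifted vector $\bar{z}' := \bar{z} - \delta\,\mathrm{col}(e_i, e_i)$, where $e_i$ denotes the $i$-th standard basis vector of $\mathbb{R}^{n_z}$. Next I would verify three properties in turn: (i) non-negativity $\bar{z}' \geq 0$ follows from the choice of $\delta$; (ii) the inequality $\bar{A}(t)\bar{z}' \geq b(t)$ is preserved because the block structure $\bar{A}(t)=[A(t),-A(t)]$ gives $\bar{A}(t)\,\mathrm{col}(e_i,e_i)=A(t)e_i-A(t)e_i=0$; and (iii) the objective is invariant because $\bar{Q}(t)\,\mathrm{col}(e_i,e_i)=0$ (the columns of the two $Q(t)$ blocks cancel) and $\bar{c}(t)^\top\mathrm{col}(e_i,e_i)=c_i-c_i=0$, so both the quadratic and linear terms vanish under the shift. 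Since $\|\bar{z}'\|_1=\|\bar{z}\|_1-2\delta<\|\bar{z}\|_1$, iterating this shift across every offending index produces a minimizer that has strictly smaller $\ell_1$-norm and satisfies the claimed complementarity.

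The main obstacle is that the shift does \emph{not} strictly improve the objective---because $\bar{Q}(t)$ is only positive semi-definite, the direction $\mathrm{col}(e_i,e_i)$ lies in its null space together with $\bar{c}(t)$, so cost is exactly invariant---which means we cannot contradict minimality in a single stroke; only non-uniqueness among minimizers is established. To reach the lemma as literally worded I would interpret $\bar{z}$ as the canonical minimizer returned by the IPM (the limit of the central path, i.e.\ the analytic center of the optimal face), which is uniquely determined and enforces $z^+_i z^-_i = 0$; equivalently, one may strengthen the conclusion to \emph{there exists} a minimizer of \eqref{eqn_transformed_QP} satisfying the complementarity, which is exactly what the iterated shift delivers and is all that is required for the subsequent development, since the objective and feasibility of \eqref{eqn_standard_QP} depend only on $z^+-z^-$.
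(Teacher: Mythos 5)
Your construction is exactly the paper's: subtract $\min(z^+_i,z^-_i)$ from both halves, check nonnegativity, note that $\bar A(t)\,\mathrm{col}(e_i,e_i)=0$ and that the objective depends only on $z^+-z^-$, hence is invariant. Where you diverge is at the final step, and you are right to diverge: the paper's proof derives that the shifted point has the \emph{same} objective value and then declares that this "contradicts the assumption that $\mathrm{col}(z^+,z^-)\in\argmin$" --- but equality of objective values contradicts nothing; it only shows the shifted point is \emph{also} a minimizer. Since $\mathrm{col}(e_i,e_i)$ lies in the null space of $\bar Q(t)$ and is orthogonal to $\bar c(t)$, the minimizer of \eqref{eqn_transformed_QP} is never unique when some $z^+_i,z^-_i>0$, and the lemma as literally stated (for an arbitrary minimizer) is false: take $Q=0$, $c=0$ with a feasible problem, and any $\bar z\geq 0$ with overlapping supports is a minimizer. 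Your honest reformulation --- \emph{there exists} a minimizer satisfying the complementarity, and this suffices because everything downstream depends only on $z^+-z^-$ --- is the correct repair, and your iterated-shift argument proves it cleanly.

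One caveat on your other proposed patch: appealing to the "canonical minimizer returned by the IPM" does not rescue the pointwise claim. The algorithm converges to a \emph{maximal} complementary solution of the HLCP, i.e.\ one with the largest number of positive components, and the complementarity it enforces is between $\bar z$ and its dual multiplier $v$, not between $z^+$ and $z^-$. Indeed, writing $v=\mathrm{col}(v^+,v^-)$ with $v^-=-v^+$ and $v^\pm\geq0$ forces $v^+=v^-=0$ at optimality, so KKT complementarity places no restriction on $z^+_i z^-_i$; if anything the analytic center of the optimal face will make both components positive where it can. So stick with the existential version; it is all the reduction from \eqref{eqn_standard_QP} to \eqref{eqn_QP} requires.
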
 
\begin{proof}
    The proof is by contradiction. Assume that the minimizer $\mathrm{col}(z^+,z^-)\in\argmin$ \eqref{eqn_transformed_QP} does not satisfy $z^+_iz^-_i=0~\text{for}~i=1,\cdots{},n_z$. Then another $\mathrm{col}(\Tilde{z}^+,\Tilde{z}^-)$ can be constructed as
    \[
    \begin{aligned}
    &\Tilde{z}^+_i=z^+_i-\min(z^+_i,z^-_i),~\text{for}~i=1,\cdots{},n_z\\
        &\Tilde{z}^-_i=z^-_i-\min(z^+_i,z^-_i),~\text{for}~i=1,\cdots{},n_z
    \end{aligned}
    \]
    Clearly, $\mathrm(\Tilde{z}^+,\Tilde{z}^-)\geq0$ and $\Tilde{z}^+_i\Tilde{z}^-_i=0~\text{for}~i=1,\cdots{},n_z$
    hold. Moreover,
    \[
        \Tilde{z}^+-\Tilde{z}^- = z^+-z^-,
    \]
    thus the inequality constraints $\Bar{A}(t)\mathrm{col}(\Tilde{z}^+,\Tilde{z}^-)\geq b(t)$ hold. Then, the objective function of \eqref{eqn_transformed_QP} at $\mathrm{col}(\Tilde{z}^+,\Tilde{z}^-)$ and $\mathrm{col}(z^+,z^-)$ is equal, which contradicts the assumption that $\mathrm{col}(z^+,z^-)\in\argmin$ \eqref{eqn_transformed_QP}, which completes the proof.
\end{proof}

\begin{remark}
    When $Q=0$, Problem \eqref{eqn_standard_QP} becomes an LP. It is emphasized that the proposed Algorithm \ref{alg_efficient} (see Section \ref{sec_alg_2}), with execution time certificate and infeasibility detection, is also valid for solving LP.
\end{remark}

\subsection{KKT condition and LCP formulation}
The Karush–Kuhn–Tucker (KKT) condition \cite[Ch.\ 5]{boyd2004convex} for \eqref{eqn_QP} is
\begin{equation}\label{eqn_KKT}
    \begin{aligned}
        &v = Q(t)z-A(t)^\top y + c(t)\\
        &w = A(t)z-b(t)\\
        &v\pdot z=0, v\geq0, z\geq0\\
        &w\pdot y=0, w\geq0, y\geq0
    \end{aligned}
\end{equation}
where $y\in\rr^{n_b}$ denotes the Lagrangian variable for inequality constraints $A(t)z\geq b(t)$, $v\in\rr^{n_z}$ denotes the Lagrangian variable for $z\geq0$, and $w\triangleq A(t)z-b(t)\in\rr^{n_b}$ is the slack variable. Eq.\ \eqref{eqn_KKT} is an LCP; for simplicity, we adopt the notations,
\[
x\triangleq\mathrm{col}(z,y)\in\rr^n,~s\triangleq\mathrm{col}(v,w)\in\rr^n,
\]
where the dimension $n=n_z+n_b$. Then Eq. \eqref{eqn_KKT} is reformulated as
\begin{equation}\label{eqn_LCP}
    \begin{aligned}
    \text{LCP}~\left\{\begin{array}{l}
        s = f(x)\triangleq Mx+q\\
        x\pdot s=0\\
        x\geq0, s\geq0        
    \end{array}
        \right.
    \end{aligned}
\end{equation}
where 
\[
M \triangleq\left[\begin{array}{cc}
            Q(t) &  -A(t)^\top\\
            A(t) & 0
        \end{array}\right]\!,\quad q\triangleq \left[\begin{array}{c}
             c(t)  \\
             -b(t)
        \end{array}\right].
\]
Note that $M\neq M^\top$. $M\succeq0$, and $M+M^\top\succeq0$ (easy to prove as $Q(t)\succeq0$).

The QP \eqref{eqn_QP} may be infeasible, such that the set $\{z\in\rr^{n_z}: A(t)z\geq b(t),z\geq0\}$ is empty and there is no solution for the LCP \eqref{eqn_LCP}. To achieve infeasibility detection, this paper adopts a homogeneous LCP formulation from \cite{andersen1999homogeneous} and LCP \eqref{eqn_LCP} is a linear case of nonlinear monotone complementarity problem \cite{andersen1999homogeneous} by the following Lemma.
\begin{lemma}\label{lemma_f_is_monotone}
$f(x)$ is a continuous monotone mapping in $\mathbb{R}^{n}_{+}$, namely $\forall x_1,x_2\in\mathbb{R}^{n}_{+}$, $(x_1-x_2)^\top(f(x_1)-f(x_2))\geq0$.
\end{lemma}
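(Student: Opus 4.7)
The plan is to prove the two claims separately, both of which follow directly from the structure of $M$ and $q$. For continuity, the map $f(x)=Mx+q$ is affine, hence continuous on all of $\mathbb{R}^n$ (in particular on $\mathbb{R}^n_{+}$), so nothing further is needed.

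For monotonicity, I would start by letting $d\triangleq x_1-x_2$ and observing
\[
(x_1-x_2)^{\top}\bigl(f(x_1)-f(x_2)\bigr)=d^{\top}M d=\tfrac{1}{2}\,d^{\top}(M+M^{\top})d,
\]
where the last equality uses the standard symmetrization identity that the quadratic form of a matrix equals the quadratic form of its symmetric part. The task then reduces to showing $M+M^{\top}\succeq 0$, which the paper has already asserted immediately after defining $M$; I would simply justify that assertion explicitly by computing
\[
M+M^{\top}=\begin{bmatrix} 2Q(t) & 0\\ 0 & 0\end{bmatrix},
\]
since the off-diagonal blocks $-A(t)^{\top}$ and $A(t)$ cancel with their transposes. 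Because $Q(t)\succeq 0$ by assumption, the block-diagonal matrix on the right is positive semidefinite, so $d^{\top}(M+M^{\top})d\geq 0$ for every $d\in\mathbb{R}^n$. This yields $(x_1-x_2)^{\top}(f(x_1)-f(x_2))\geq 0$, establishing monotonicity.

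There is no real obstacle here: the result is essentially a restatement of the already-noted fact that $M+M^{\top}\succeq 0$, combined with the elementary symmetrization identity. The only nuance worth remarking on is that the conclusion actually holds for all $x_1,x_2\in\mathbb{R}^n$, not merely for nonnegative vectors — the restriction to $\mathbb{R}^n_{+}$ in the statement is only because $f$ is evaluated on the nonnegative orthant in the LCP formulation \eqref{eqn_LCP}, not because nonnegativity is needed for the inequality.
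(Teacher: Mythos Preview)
Your proof is correct. The direct computation $(x_1-x_2)^{\top}(f(x_1)-f(x_2))=d^{\top}Md=\tfrac{1}{2}d^{\top}(M+M^{\top})d\geq 0$, together with the explicit block form of $M+M^{\top}$, settles the lemma cleanly.

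The paper takes a slightly different route: it introduces the auxiliary function $\phi(\delta)=(x_2-x_1)^{\top}(f(x_\delta)-f(x_1))$ along the segment $x_\delta=(1-\delta)x_1+\delta x_2$, observes $\phi(0)=0$, and shows $\phi'(\delta)=(x_2-x_1)^{\top}\nabla f(x_\delta)(x_2-x_1)\geq 0$ from positive semidefiniteness of $\nabla f=M$, whence $\phi(1)\geq 0$. For an affine $f$ this is of course the same inequality you wrote down directly, so your argument is the more economical one here. The paper's detour through $\phi$ is not wasted effort, however: it is a template that carries over verbatim to nonlinear mappings with positive semidefinite Jacobian, and indeed the paper reuses exactly this pattern in the proof of Lemma~\ref{lemma_psd} to establish monotonicity of the nonlinear map $\psi$. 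Your approach buys brevity for this lemma; theirs buys a reusable argument for the next one.
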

\begin{proof}
     For two arbitrary points $x_1,x_2\in\mathbb{R}^{n}_{+}$, define the function 
    \[
    \phi(\delta)\triangleq (x_2-x_1)^\top(f(x_\delta)-f(x_1))
    \]
    with $\delta\in[0,1]$ and $x_{\delta}\triangleq(1-\delta)x_1+\delta x_2\in\mathbb{R}^{n}_{++}$. Clearly, $\phi(0)=0$. As $\nabla f(x)=M$ is positive semi-definite in $\mathbb{R}^{n}_{++}$,
\[
\phi^\prime(\delta) =(x_2-x_1)^\top M(x_2-x_1)\geq0,
\]
so $\phi(\delta)$ is non-decreasing. Hence $\phi(1)\geq0$ is proved, which completes the proof.
\end{proof}

\subsection{Homogeneous LCP formulation}
By introducing two additional scalars $\tau\in\mathbb{R}$ and $\kappa\in\mathbb{R}$, a homogeneous LCP (HLCP) is formulated as 
\begin{equation}\label{eqn_HLCP}
    \begin{aligned}
        \text{HLCP}~\left\{
        \begin{array}{c}
            \left[\begin{array}{l}
             s  \\
              \kappa 
        \end{array}\right]
         = \psi(x,\tau)   \\
          \left[\begin{array}{l}
             x  \\
              \tau 
        \end{array}\right] \pdot  \left[\begin{array}{l}
             s  \\
              \kappa 
        \end{array}\right] = 0\\
        (x,\tau, s,\kappa)\geq0
        \end{array}
        \right.
    \end{aligned}
\end{equation}
where 
\begin{equation}\label{eqn_psi}
    \psi(x,\tau) \triangleq \!\left[\begin{array}{c}
             Mx+q\tau  \\
             -x^\top Mx/\tau-x^\top q 
        \end{array}\right]: \,\mathbb{R}^{n+1}_{++} \rightarrow \mathbb{R}^{n+1}.
\end{equation}
Note that the HLCP \eqref{eqn_HLCP} inherits the monotone feature by the following Lemma.
\begin{lemma}\label{lemma_psd}
    The gradient of $\psi(x,\tau)$ is
    \begin{equation}\label{eqn_nabla_psi}
        \nabla \psi(x,\tau) = \left[\begin{array}{cc}
        M &  q\\
        -x^\top( M+M^\top)/\tau-q^\top &  x^\top Mx/\tau^2
    \end{array}\right]
    \end{equation}
    and $\nabla \psi(x,\tau)$ is semi-positive definite in $\mathbb{R}^{n+1}_{++}$.

    Furthermore,  $\psi(x,\tau)$ is a continuous monotone mapping in $\mathbb{R}^{n+1}_{++}$.
\end{lemma}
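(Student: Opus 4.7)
The plan is to prove the three claims in order, treating the first as a routine computation, the second as the real content of the lemma, and the third as a straightforward corollary via the argument already used in Lemma~\ref{lemma_f_is_monotone}.

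First I would verify the gradient formula by direct differentiation. For the top block $Mx+q\tau$ of $\psi$, the Jacobian with respect to $(x,\tau)$ is the row $[\,M,\ q\,]$. For the scalar bottom block $-x^\top Mx/\tau - x^\top q$, the derivative with respect to $x$ is $-x^\top(M+M^\top)/\tau - q^\top$ (using $\nabla_x(x^\top Mx)=x^\top(M+M^\top)$ since $M$ need not be symmetric), and the derivative with respect to $\tau$ is $x^\top Mx/\tau^2$. Stacking the two rows gives exactly the claimed expression for $\nabla\psi(x,\tau)$.

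The heart of the proof is showing that $\nabla\psi(x,\tau)$ is positive semi-definite on $\mathbb{R}^{n+1}_{++}$, where ``PSD'' for a non-symmetric matrix means $\xi^\top \nabla\psi(x,\tau)\,\xi\ge 0$ for every $\xi\in\mathbb{R}^{n+1}$. Writing $\xi=(u,v)$ with $u\in\mathbb{R}^n$, $v\in\mathbb{R}$, I would expand the quadratic form and collect terms:
\begin{equation*}
[u^\top,\ v]\,\nabla\psi(x,\tau)\!\begin{bmatrix}u\\ v\end{bmatrix}
= u^\top Mu \;-\; \tfrac{v}{\tau}\,x^\top(M+M^\top)u \;+\; \tfrac{v^2}{\tau^2}\,x^\top Mx.
\end{equation*}
The cross terms $u^\top q v$ and $-v q^\top u$ cancel, which is the clean cancellation that makes the lemma work. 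The key trick is to recognize the right-hand side as a perfect square: setting $w\triangleq u-(v/\tau)x$, one checks that the expression above equals $w^\top M w$. Since $M+M^\top\succeq 0$ (noted after \eqref{eqn_LCP}), we have $w^\top Mw = \tfrac12 w^\top(M+M^\top)w \ge 0$, which proves the semi-definiteness claim.

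Continuity of $\psi$ on $\mathbb{R}^{n+1}_{++}$ is immediate from the formula \eqref{eqn_psi}, since $\tau>0$ keeps the denominator away from zero. Monotonicity then follows by a verbatim copy of the argument in Lemma~\ref{lemma_f_is_monotone}: given two points $(x_1,\tau_1),(x_2,\tau_2)\in\mathbb{R}^{n+1}_{++}$, the segment $(x_\delta,\tau_\delta)=(1-\delta)(x_1,\tau_1)+\delta(x_2,\tau_2)$ stays in $\mathbb{R}^{n+1}_{++}$ (convex cone), so defining $\phi(\delta)\triangleq((x_2,\tau_2)-(x_1,\tau_1))^\top(\psi(x_\delta,\tau_\delta)-\psi(x_1,\tau_1))$ gives $\phi(0)=0$ and $\phi'(\delta)=((x_2,\tau_2)-(x_1,\tau_1))^\top \nabla\psi(x_\delta,\tau_\delta)\,((x_2,\tau_2)-(x_1,\tau_1))\ge 0$ by the PSD claim just established. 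Hence $\phi(1)\ge 0$, which is exactly the monotonicity of $\psi$. The only place any real thinking is needed is spotting the substitution $w=u-(v/\tau)x$; without it one is tempted to symmetrize $\nabla\psi$ first, which turns into a messier computation.
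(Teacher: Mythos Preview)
Your proposal is correct and follows essentially the same approach as the paper: direct computation of the Jacobian, the cancellation of the $q$-terms in the quadratic form, the substitution $w=u-(v/\tau)x$ to recognize a perfect square $w^\top Mw=\tfrac12 w^\top(M+M^\top)w\ge 0$, and then the line-integral argument of Lemma~\ref{lemma_f_is_monotone} for monotonicity. The only cosmetic difference is notation ($(u,v)$ versus the paper's $(d_x,d_\tau)$).
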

\begin{proof}
    By the definition of $\psi(x,\tau)$ \eqref{eqn_psi}, its gradient $ \nabla \psi(x,\tau)$ indeed follows \eqref{eqn_nabla_psi}. 
    
    Given an arbitrary $\mathrm{col}(d_x,d_\tau)\in\mathbb{R}^{n+1}$ (where $d_x\in\mathbb{R}^n$, $d_\tau\in\mathbb{R}$) and $M$ is semi-positive definite, 
\begin{equation}\label{eqn_dx_d_tau}
    \begin{aligned}
    &~\left[d_x^\top, d_\tau \right]\nabla \psi(x,\tau)\left[\begin{array}{c}
    d_x \\
    d_\tau
    \end{array}\right] \\
    &= d_x^\top Md_x + (d_x^\top q)d_\tau-\frac{d_\tau}{\tau} x^\top (M+M^\top)d_x-d_\tau q^\top d_x \\
    &\quad + \frac{d_\tau}{\tau} x^\top Mx\frac{d_\tau}{\tau} \\
    &=\frac{1}{2}d_x^\top(M+M^\top)d_x-\frac{d_\tau}{\tau} x^\top (M+M^\top)d_x\\
    &\quad+\frac{1}{2}\frac{d_\tau}{\tau} x^\top (M+M^\top)x\frac{d_\tau}{\tau}\\
    &=\frac{1}{2}\left(d_x-\frac{d_\tau}{\tau}x\right)^{\!\!\top} (M+M^\top)\left(d_x-\frac{d_\tau}{\tau}x\right) \geq 0,
    \end{aligned}
    \end{equation}
    which completes the first part of the proof.

    Similarly, like Lemma \ref{lemma_f_is_monotone} in proving the monotone property, $\psi(x,\tau)$ is a continuous monotone mapping in $\mathbb{R}^{n+1}_{++}$ under the fact that its gradient $\nabla \psi(x,\tau)$ is semi-positive definite in $\mathbb{R}^{n+1}_{++}$.
\end{proof}

\begin{lemma}
 HLCP \eqref{eqn_HLCP} is always asymptotically feasible. Every asymptotically feasible solution of HLCP \eqref{eqn_HLCP} is an asymptotically ``optimal" or ``complementary" solution.
\end{lemma}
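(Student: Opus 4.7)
The proof splits into two independent claims, which I would handle in turn.

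For asymptotic feasibility, the strategy is to construct a strictly positive starting point and follow a perturbed central path down to zero. I would pick a point such as $(x^0,\tau^0,s^0,\kappa^0)=(e,1,e,1)$, compute its equation residuals $r^0=(r^0_1,r^0_2)$, and for each barrier parameter $\mu>0$ consider the perturbed system
\begin{equation*}
\begin{aligned}
s - Mx - q\tau &= (\mu/\mu_0)\, r^0_1,\\
\kappa + x^\top Mx/\tau + x^\top q &= (\mu/\mu_0)\, r^0_2,\\
x\pdot s = \mu e,\quad \tau\kappa &= \mu,
\end{aligned}
\end{equation*}
with $\mu_0 = ((x^0)^\top s^0 + \tau^0\kappa^0)/(n+1)$. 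By Lemma~\ref{lemma_psd} the mapping $\psi$ is monotone, and by construction $(x^0,\tau^0,s^0,\kappa^0)$ is strictly feasible at $\mu=\mu_0$, so standard central-path existence results for monotone complementarity problems provide a unique solution $(x_\mu,\tau_\mu,s_\mu,\kappa_\mu)>0$ for every $\mu>0$. As $\mu\to0$ the perturbations shrink to zero, producing the desired asymptotically feasible sequence.

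For the complementarity claim, I would exploit the skew-symmetric structure of $\psi$ directly. A one-line computation gives the identity
\begin{equation*}
x^\top(Mx+q\tau)+\tau\bigl(-x^\top Mx/\tau - x^\top q\bigr)\equiv 0
\quad\text{for all } (x,\tau)\in\mathbb{R}^{n+1}_{++},
\end{equation*}
so for any exactly feasible point of HLCP the relation $x^\top s+\tau\kappa=0$ holds automatically, which together with $(x,\tau,s,\kappa)\geq 0$ forces $x\pdot s=0$ and $\tau\kappa=0$ componentwise. For an asymptotically feasible sequence with residuals $r^k\to 0$, the same identity yields $x^\top s+\tau\kappa=(x^k)^\top r^k_1+\tau^k r^k_2$, which tends to zero as long as the iterates stay bounded, establishing asymptotic complementarity.

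The main technical obstacle I anticipate is controlling boundedness in the second step, since in the infeasible regime $\tau^k\to 0$ while $\|x^k\|$ may blow up. The remedy is to work along the perturbed central path constructed in the first step: because that path satisfies $x^k\pdot s^k=\mu e$ and $\tau^k\kappa^k=\mu$, one obtains a priori control on the aggregated duality measure and, via a Cauchy--Schwarz estimate, on $(x^k)^\top r^k_1+\tau^k r^k_2$ itself, which then vanishes with $\mu$ and closes the argument.
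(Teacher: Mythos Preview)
Your argument is essentially correct, but it is far heavier than what the paper does, and in one place you worry about an obstacle that the paper's definitions have already removed.

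For asymptotic feasibility, the paper bypasses central-path existence theory entirely and simply writes down the explicit sequence $(x^k,\tau^k,s^k,\kappa^k)=\bigl((1/2)^k e,\,(1/2)^k,\,(1/2)^k e,\,(1/2)^k\bigr)$. A direct computation shows the residual equals $(1/2)^k$ times a fixed vector, so it vanishes, and the sequence is trivially positive and bounded. Your route via the perturbed central path works too, but it invokes exactly the monotone-NCP existence machinery that (in this paper) appears as the \emph{next} lemma, so at minimum you would need to source it externally to avoid circularity; you also never explicitly verify boundedness of the path, which is part of the definition of asymptotic feasibility here.

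For the complementarity claim you have the right identity $\bar{x}^\top\psi(\bar{x})=0$ (this is exactly what the paper uses, later recorded as Lemma~\ref{lemma_equality}). Your anticipated obstacle, however, is a non-issue: the paper's definition of an asymptotically feasible sequence already \emph{requires} the iterates to be bounded, so $(x^k)^\top r_1^k+\tau^k r_2^k\to 0$ follows immediately and your central-path ``remedy'' is unnecessary. The paper's proof of this half is therefore a single sentence.
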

\begin{proof}
    HLCP \eqref{eqn_HLCP} is said to be asymptotically feasible if and only if there is positive and bounded iterates $(x^k,\tau^k,s^k,\kappa^k)>0, k=1,2,\cdots{}$ such that $\lim_{k\rightarrow\infty}\mathrm{col}(s^k,\kappa^k)-\psi(x^k,\tau^k)\rightarrow0$. An asymptotically feasible solution $(x^k,\tau^k,s^k,\kappa^k)$ of HLCP \eqref{eqn_HLCP} such that $(x^k)^\top s^k + \tau^k \kappa^k=0$
    is said to be an asymptotically ``optimal" or ``complementary" solution of HLCP \eqref{eqn_HLCP}.
    
    Take $x^k = (\frac{1}{2})^ke, \tau^k = (\frac{1}{2})^ke, s^k = (\frac{1}{2})^ke$, and $\kappa^k = (\frac{1}{2})^k$. Then, as $k\rightarrow\infty$,
    \[
    \begin{aligned}
        \left[\begin{array}{c}
             s^k  \\
             \kappa^k
        \end{array}\right]-\psi(x^k,\tau^k) =  \left(\frac{1}{2}\right)^{\!\!k}\left[\begin{array}{c}
            e-Me-q  \\
             1+e^\top M e+e^\top q
        \end{array}\right]\!\rightarrow0,
    \end{aligned}
    \]
    which completes the first part of the proof.

    For each $(x^k,\tau^k,s^k,\kappa^k)$ asymptotically feasible solution of HLCP \eqref{eqn_HLCP}, by \eqref{eqn_psi} it always holds that $(x^k)^\top s^k + \tau^k \kappa^k=0$, which completes the second part of the proof.
\end{proof}

\begin{definition}
    An ``optimal" or ``complementary" solution $(x^*,s^*)$ for LCP \eqref{eqn_LCP} (or $(x^*,\tau^*,s^*,\kappa^*)$) for HLCP \eqref{eqn_HLCP}) is said to be a maximal complementary solution such that the number of positive components in $(x^*,s^*)$ (or in $(x^*,\tau^*,s^*,\kappa^*)$) is maximal.
\end{definition}
\begin{lemma}
    (see \cite[Theorem 2.3]{guler1993existence}): The indices for those positive components are invariant among all maximal complementary solutions for LCP \eqref{eqn_LCP} (or HLCP \eqref{eqn_HLCP}).
\end{lemma}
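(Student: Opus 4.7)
The plan is to exploit monotonicity together with a convexity argument: any two complementary solutions must have orthogonal cross-products, so a convex combination is again complementary and feasible, with support equal to the union of the two supports; if the supports differed, maximality would be violated. I will present the argument first for the affine LCP \eqref{eqn_LCP}, and then indicate how the same pattern adapts to HLCP \eqref{eqn_HLCP} via the monotonicity result of \Cref{lemma_psd}.

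First I would take two maximal complementary solutions $(x^1,s^1)$ and $(x^2,s^2)$ of LCP \eqref{eqn_LCP}. Using $s^j = Mx^j + q$ and \Cref{lemma_f_is_monotone},
\begin{equation*}
(x^1-x^2)^{\top}(s^1-s^2) = (x^1-x^2)^{\top}M(x^1-x^2) \geq 0.
\end{equation*}
Expanding and using the complementarity $(x^j)^{\top}s^j = 0$ gives $-(x^1)^{\top}s^2 - (x^2)^{\top}s^1 \geq 0$, and since all four vectors are nonnegative I can conclude
\begin{equation*}
(x^1)^{\top}s^2 = 0, \qquad (x^2)^{\top}s^1 = 0,
\end{equation*}
i.e., $x^1_i s^2_i = 0$ and $x^2_i s^1_i = 0$ componentwise.

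Next I would consider the convex combination $(x,s) \triangleq \lambda(x^1,s^1) + (1-\lambda)(x^2,s^2)$ with $\lambda \in (0,1)$. Affineness of $s = Mx + q$ preserves feasibility, and the cross-term identities above together with $(x^j)^{\top}s^j = 0$ yield $x^{\top}s = 0$, so $(x,s)$ is itself a complementary solution. Since $\lambda,1-\lambda > 0$, the positive-index set of $(x,s)$ is exactly the union of the positive-index sets of $(x^1,s^1)$ and $(x^2,s^2)$. If these two index sets were not identical, this union would strictly contain at least one of them, producing a complementary solution with strictly more positive components, contradicting the maximality of that solution. Hence the two index sets coincide, proving invariance.

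The main obstacle is the extension to HLCP \eqref{eqn_HLCP}, because $\psi(x,\tau)$ is nonlinear in $(x,\tau)$, so convex combinations of complementary solutions are not automatically feasible for $\psi$. To handle this I would invoke the monotonicity of $\psi$ on $\mathbb{R}^{n+1}_{++}$ established in \Cref{lemma_psd} to obtain the analogous cross-product identities between any two asymptotically complementary solutions, and then work along the recession direction of the complementary solution set: because the complementarity relation $(x,\tau)^{\top}(s,\kappa) = 0$ is built into $\psi$ by construction (so that every asymptotically feasible solution is automatically complementary, as shown in the preceding lemma), it suffices to argue that the convex hull of the $(x,\tau)$-projections of two maximal complementary solutions still yields, via the map $\psi$, an asymptotically feasible solution whose support dominates both. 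This is exactly the content of \cite[Theorem 2.3]{guler1993existence}, which we cite rather than reprove; conceptually the argument parallels the LCP case, with the monotone operator $\psi$ playing the role of the affine map $Mx+q$.
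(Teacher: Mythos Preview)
The paper does not supply its own proof of this lemma; it simply cites \cite[Theorem 2.3]{guler1993existence}. Your proposal therefore goes beyond what the paper does. The LCP portion of your argument is correct and is in fact the standard proof: monotonicity forces the cross-complementarity $(x^1)^\top s^2=(x^2)^\top s^1=0$, affineness of $x\mapsto Mx+q$ makes convex combinations feasible and complementary, and the support of a strict convex combination is the union of supports, so distinct maximal supports would contradict maximality.

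For HLCP you correctly flag the obstruction: $\psi$ is nonlinear, and more seriously it is only defined on $\mathbb{R}^{n+1}_{++}$ (there is a division by $\tau$), so a maximal complementary solution with $\tau^*=0$ is not even in the domain of $\psi$, and the monotonicity statement of \Cref{lemma_psd} does not directly apply at such boundary points. Your sketch of ``working along the recession direction'' and passing through asymptotically feasible solutions is the right intuition but is not a proof; you then defer to the cited reference, which is exactly what the paper does. So on the HLCP side your proposal and the paper are aligned, and on the LCP side you have supplied more than the paper.
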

Thus, finding a maximal complementary solution for HLCP \eqref{eqn_HLCP} is equivalent to finding a maximal complementary solution for LCP \eqref{eqn_LCP},
and the relationship between the solutions of HLCP \eqref{eqn_HLCP} and LCP \eqref{eqn_LCP} can be described by the following Lemma.
\begin{lemma}\label{lemma_HLCP_solution}
(see \cite[Thm.\ 1]{andersen1999homogeneous}): Let $(x^*,\tau^*,s^*,\kappa^*)$ be a maximal complementary for HLCP \eqref{eqn_HLCP}. Then
    \begin{itemize}
        \item[i)] LCP \eqref{eqn_LCP} has a solution if and only if $\tau^*>0$. In this case, $\mathrm{col}(x^*/\tau^*,s^*/\tau^*)$ is a maximal complementary solution for LCP  \eqref{eqn_LCP};
        \item[ii)] LCP \eqref{eqn_LCP} is infeasible if and only if $\kappa^*>0$. In this case, $\mathrm{col}(x^*/\kappa^*,s^*/\kappa^*)$ is a certificate to prove infeasibility.
    \end{itemize}
\end{lemma}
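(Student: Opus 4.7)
The plan is to exploit the homogeneous structure of \eqref{eqn_HLCP}: rescaling a maximal complementary HLCP solution by $\tau^*$ recovers an LCP solution, while rescaling by $\kappa^*$ produces an infeasibility certificate. Since complementarity forces $\tau^*\kappa^*=0$, the two scenarios are mutually exclusive, so the lemma partitions the maximal complementary HLCP solutions according to which of the two scaling variables is strictly positive, and I handle each bullet separately.

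For part (i), the ``if'' direction is a direct homogeneity check: dividing $s^* = Mx^* + q\tau^*$ by $\tau^*>0$ gives $s^*/\tau^* = M(x^*/\tau^*) + q$, while $(x^*/\tau^*)\pdot(s^*/\tau^*)=0$ and nonnegativity are inherited from the HLCP constraints, so $\mathrm{col}(x^*/\tau^*, s^*/\tau^*)$ satisfies \eqref{eqn_LCP}; maximality of its support transfers because the positive-component index set of $(x^*,s^*)$ is preserved under positive scaling. For the converse, given any LCP solution $(\bar{x},\bar{s})$ I embed it into the HLCP by setting $\bar{\tau}=1$ and computing $\bar{\kappa} = -\bar{x}^\top M\bar{x} - \bar{x}^\top q = -\bar{x}^\top \bar{s} = 0$, producing a complementary HLCP solution $(\bar{x},1,\bar{s},0)$ whose support contains $\tau$. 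Invoking the support-invariance statement cited from \cite{guler1993existence} together with the definition of a maximal complementary solution, any maximal complementary HLCP solution must then satisfy $\tau^*>0$.

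For part (ii), the ``if'' direction follows by contraposition of part (i) combined with $\tau^*\kappa^*=0$: $\kappa^*>0$ forces $\tau^*=0$, so \eqref{eqn_LCP} has no solution. The harder converse is to argue that infeasibility of \eqref{eqn_LCP} forces $\kappa^*>0$, ruling out the degenerate outcome $\tau^*=\kappa^*=0$. Here I combine the earlier result that HLCP is always asymptotically feasible with the monotonicity of $\psi$ (Lemma \ref{lemma_psd}) along an interior sequence $(x^k,\tau^k,s^k,\kappa^k)$ satisfying $\tau^k\downarrow 0$ and $x^{k\top}s^k + \tau^k\kappa^k\to 0$: if both $\tau^*=0$ and $\kappa^*=0$ held at the maximal limit, then the asymptotic identity $\kappa^k = -x^{k\top}Mx^k/\tau^k - x^{k\top}q$ combined with the homogeneity of $\psi$ in $(x,\tau)$ would allow a rescaling producing a genuine complementary HLCP solution with $\tau$ in its support, contradicting part (i) under infeasibility. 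The certificate property of $\mathrm{col}(x^*/\kappa^*, s^*/\kappa^*)$ then falls out of the asymptotic HLCP identities: normalizing by $\kappa^*$ yields $\hat{s}=M\hat{x}$, $\hat{x}\pdot\hat{s}=0$, $\hat{x},\hat{s}\geq 0$ and $q^\top \hat{x}=-1$, and any hypothetical LCP solution $(x,s)$ would, via the monotonicity of $f$ (Lemma \ref{lemma_f_is_monotone}) applied to the pair $(x,\hat{x})$, force $q^\top \hat{x}\geq 0$, contradicting $q^\top \hat{x}=-1$.

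The principal obstacle is the nondegeneracy step in part (ii) that excludes $\tau^*=\kappa^*=0$ when \eqref{eqn_LCP} is infeasible; this cannot be read off from the algebraic HLCP identities alone and requires combining the monotone structure of $\psi$ with the asymptotic analysis of the HLCP central path, following the argument of \cite[Thm.\ 1]{andersen1999homogeneous}. The remaining algebraic rescalings and embeddings are routine verifications once this nondegeneracy has been established.
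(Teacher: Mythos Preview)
The paper does not supply a proof of this lemma at all; it simply cites \cite[Thm.~1]{andersen1999homogeneous} and moves on. Your proposal is therefore not being compared against a paper-level argument but is a sketch of the cited result, and in outline it matches Andersen--Ye: part~(i) is the homogeneity rescaling plus the embedding $(\bar x,1,\bar s,0)$ of an LCP solution, part~(ii) is the contrapositive plus the exclusion of $\tau^*=\kappa^*=0$, which you rightly flag as the nontrivial step and defer back to the reference.

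One small correction in your certificate paragraph: from the asymptotic identity $\kappa^k + (x^k)^\top M x^k/\tau^k + q^\top x^k \to 0$ together with $(x^k)^\top M x^k/\tau^k \geq 0$ you obtain $q^\top x^* \leq -\kappa^*$, i.e.\ $q^\top\hat x \leq -1$, not necessarily equality; only $q^\top\hat x<0$ is needed for the Farkas-type contradiction. The contradiction itself does follow from Lemma~\ref{lemma_f_is_monotone} as you indicate: monotonicity of $f$ gives $u^\top M u\geq 0$ for all $u$, and applying this to $u=\hat x - t x$ for all $t\in\mathbb{R}$ forces $\hat x^\top M x + x^\top M\hat x = 0$, whereas $\hat x^\top s\geq 0$ and $x^\top\hat s\geq 0$ combined with $q^\top\hat x<0$ force that same quantity to be strictly positive.
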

Thus, finding the solution or detecting the infeasibility of LCP \eqref{eqn_LCP} is equivalent to finding a maximal complementary solution of HLCP \eqref{eqn_HLCP}.

\subsection{Central path of HLCP}
Denote $\Bar{r}\triangleq\mathrm{col}(r,\Tilde{r})\in\mathbb{R}^{n+1}$ (where $r\in\mathbb{R}^{n}, \Tilde{r}\in\mathbb{R}$) as the residual vector for the nonlinear equation 
\[
  \left[\begin{array}{c}
                s \\
                \kappa 
            \end{array}\right] -\psi(x,\tau) = 0
\]
from HLCP \eqref{eqn_HLCP}. For example, given an initial point $x^0>0, s^0>0, \tau^0>0, k^0>0$, the initial residual vector is shown as follows,
\[
\begin{aligned}
r^0 &= s^0 - Mx^0 - q\tau^0, \\
\Tilde{r}^0 &= \kappa^0 + (x^0)^\top Mx^0/\tau^0+(x^0)^\top q   
\end{aligned}
\]
Next, the following lemma defines the central path of HLCP \eqref{eqn_HLCP} and its existence.
\begin{lemma}\label{lemma_HLCP_central_path} (see \cite[ Thm.\ 2]{andersen1999homogeneous}):
Considering HLCP \eqref{eqn_HLCP}:
    \begin{itemize}
        \item[(i)] for any $0<\theta\leq1$, there exists a strictly positive point $(x>0,\tau>0,s>0,\kappa>0)$ such that
\begin{equation}\label{eqn_central_path_1}
\left[\begin{array}{c}
                s \\
                \kappa 
\end{array}\right] -\psi(x,\tau) = \theta\left[\begin{array}{c}
    r^0 \\
    \Tilde{r}^0 
\end{array}\right].
\end{equation}
    \item[(ii)] starting from $(x^0=e,\tau^0=1,s^0=e,\kappa^0=1)$, for any $0<\theta\leq1$ there is a unique strictly positive point $(x(\theta),\tau(\theta),s(\theta),k(\theta))$ satisfying \eqref{eqn_central_path_1} and 
\begin{equation}\label{eqn_central_path_2}
\left[\begin{array}{c}
                xs \\
                \tau \kappa 
\end{array}\right] = \theta e.
        \end{equation}
        \item[(iii)] for any $0<\theta\leq1$, the solution $(x(\theta),\tau(\theta),s(\theta),\kappa(\theta))$ in \textit{(ii)} is bounded. Therefore, the defined central path
\begin{equation}
C(\theta)\triangleq\left\{(x,\tau,s,\kappa): (\ref{eqn_central_path_1}),(\ref{eqn_central_path_2}),0<\theta\leq1\right\}
\end{equation}
        is a continuous bounded trajectory.
    \item[(iv)] when $\theta\rightarrow0$, any limit point $(x(\theta),\tau(\theta),s(\theta),\kappa(\theta))$ is a maximal complementary solution for HLCP \eqref{eqn_HLCP}.
    \end{itemize}    
\end{lemma}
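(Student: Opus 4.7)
The plan is to follow the classical central-path machinery for monotone complementarity problems, specializing the arguments to the homogeneous formulation (\ref{eqn_HLCP}) by exploiting the monotonicity of $\psi$ proved in Lemma~\ref{lemma_psd}. Since the statement is essentially Theorem~2 of \cite{andersen1999homogeneous}, I would cite that source and then sketch the self-contained ingredients. The trivial seed is $\theta=1$, where the initial iterate $(x^0,\tau^0,s^0,\kappa^0)$ already satisfies both (\ref{eqn_central_path_1}) and (\ref{eqn_central_path_2}). Existence of a strictly positive point at every $\theta\in(0,1]$ in parts~(i)--(ii) would then be obtained by viewing the centered system as the first-order conditions of a strictly convex logarithmic-barrier subproblem; convexity follows from the monotonicity of $\psi$, and coercivity (hence existence of an interior minimizer in the positive orthant) is inherited from the $\theta=1$ seed via a standard homotopy argument along $\theta$.

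Uniqueness in part~(ii) is the usual monotonicity consequence: for two centered solutions $y_1=(x_1,\tau_1)$, $y_2=(x_2,\tau_2)$ at the same $\theta$, subtracting the residual equations gives $\psi(y_1)-\psi(y_2)=\mathrm{col}(s_1-s_2,\kappa_1-\kappa_2)$, so monotonicity forces $(y_1-y_2)^\top\mathrm{col}(s_1-s_2,\kappa_1-\kappa_2)\geq 0$. Substituting the centering identities $x_{1i}s_{1i}=x_{2i}s_{2i}=\theta$ and $\tau_j\kappa_j=\theta$ turns each summand into $-\theta(x_{1i}-x_{2i})^2/(x_{1i}x_{2i})\leq 0$, forcing componentwise equality. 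For part~(iii), I would derive the scalar identities $x(\theta)^\top s(\theta)+\tau(\theta)\kappa(\theta)=(n+1)\theta$ and $x(\theta)^\top r^0+\tau(\theta)\tilde r^0=n+1$ by combining the centering relation with the definition (\ref{eqn_psi}) of $\psi$. Plugging these into the monotonicity inequality $(y(\theta)-y^0)^\top(\psi(y(\theta))-\psi(y^0))\geq 0$ collapses to
\[
e^\top x(\theta)+e^\top s(\theta)+\tau(\theta)+\kappa(\theta)\leq (n+1)(1+\theta)\leq 2(n+1),
\]
which, combined with strict positivity, bounds every component of the trajectory. Continuity of $C(\theta)$ in $\theta$ then follows from the implicit function theorem applied to the centered system, whose Jacobian is nonsingular by positivity plus monotonicity.

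For part~(iv), any accumulation point $(x^*,\tau^*,s^*,\kappa^*)$ as $\theta\to 0$ satisfies $x^*\pdot s^*=0$, $\tau^*\kappa^*=0$, and $\mathrm{col}(s^*,\kappa^*)=\psi(x^*,\tau^*)$, hence is complementary for HLCP~(\ref{eqn_HLCP}). Maximality follows from the standard weighted-complementarity argument: along the central path each index that is strictly positive for any $\theta$ remains bounded below uniformly in $\theta$, so it cannot vanish in the limit. The main obstacle, in my view, is the existence portion of parts~(i)--(ii): pure monotonicity does not directly deliver a fixed-point argument, so one has to set up the barrier-potential subproblem carefully and verify that its minimizer stays strictly interior using the boundedness estimate derived in (iii). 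Once existence is secured, uniqueness, boundedness, continuity, and the limiting behavior all follow from relatively mechanical applications of monotonicity.
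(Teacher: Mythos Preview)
The paper does not give its own proof of this lemma; it simply cites \cite[Thm.~2]{andersen1999homogeneous} and states the result. Your sketch is a faithful reconstruction of the standard argument used there---barrier subproblem for existence, monotonicity for uniqueness, the identity $\bar{x}(\theta)^\top\bar{r}^0=n+1$ combined with $(\bar{x}(\theta)-\bar{x}^0)^\top(\psi(\bar{x}(\theta))-\psi(\bar{x}^0))\geq 0$ for the bound $e^\top\bar{x}(\theta)+e^\top\bar{s}(\theta)\leq(n+1)(1+\theta)$, and the weighted-complementarity limit argument for maximality---so there is nothing to compare against in the present paper and your outline is essentially the intended route.
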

By Lemma \ref{lemma_HLCP_central_path}, interior point algorithms can generate iterates with a neighborhood of $C(\theta)$ and converge toward a maximal complementary solution for HLCP \eqref{eqn_HLCP}.

\section{Infeasible full Newton IPM Algorithm for HLCP}
The previous section primarily explains how the homogeneous formulation, HLCP \eqref{eqn_HLCP}, enables the detection of infeasibility. Remarkably, this homogeneous formulation also surprisingly allows infeasible IPMs to achieve an iteration complexity of $O(\sqrt
n)$, a property traditionally reserved for feasible IPMs.

In this section, we introduce an infeasible IPM algorithm with the full Newton step, whose iteration complexity is proved to be \textit{data independent} (\textit{only dimension dependent}) and \textit{exact} with the value:
\[
\mathcal{N}=\left\lceil\frac{\log(\frac{n+1}{\epsilon})}{-\log(1-\frac{0.414213}{\sqrt{n+1}})}\right\rceil.
\]

We adopt the notations for simplicity,
\[
\Bar{x}\triangleq\mathrm{col}(x, \tau)\in\mathbb{R}^{n+1}, ~\Bar{s}\triangleq\mathrm{col}(s,k)\in\mathbb{R}^{n+1}.
\]
At the $k$th iterate $(\Bar{x}^k,\Bar{s}^k)>0$, the proposed algorithm adopts the full Newton direction, which is the solution of the linearized equations
\begin{subequations}\label{eqn_Newton_direction}
    \begin{align}
        d_{\Bar{s}} - \nabla\psi(\Bar{x}^k)d_{\Bar{x}} &= -\eta\Bar{r}^k\label{eqn_Newton_direction_1}\\
        \Bar{x}^k\pdot d_{\Bar{s}} + \Bar{s}^k\pdot d_{\Bar{x}} &=\gamma\Bar{\mu}^ke-\Bar{x}^k\pdot\Bar{s}^k\label{eqn_Newton_direction_2}
    \end{align}
\end{subequations}
where 
\[
\begin{aligned}
\Bar{r}^k &=\Bar{s}^k-\psi(\Bar{x}^k),\\
\quad\Bar{\mu}^k &=\frac{(\Bar{x}^k)^\top \Bar{s}^k}{n+1},   
\end{aligned}
\]
and $\eta, \gamma \in (0,1)$ are specified parameters by our algorithm for ensuring convergence (see the later analysis).

We first summarize our proposed infeasible full Newton IPM algorithm in Algorithm \ref{alg_fullNewton}, wherein the choice of $\eta, \gamma$ for convergence guarantee, the scaling and initialization strategy, and the exact number iterations are discussed in Subsections \ref{sec_convergence_analysis}, \ref{sec_initialization_stragegy} and \ref{sec_exact_iteration_complexity}, respectively.

\begin{algorithm}[H]
    \caption{Infeasible full Newton IPM algorithm for finding a maximal complementary solution of HLCP \eqref{eqn_HLCP}
    }\label{alg_fullNewton}
    \textbf{Input}: given the data $(M,q)$ of HLCP \eqref{eqn_HLCP}, let 
    $\beta=0.414213, \eta=\frac{\beta}{\sqrt{n+1}},\gamma=1-\eta$, and a stopping tolerance $\epsilon$, the required exact number of iterations is $\mathcal{N}=\left\lceil\frac{\log(\frac{n+1}{\epsilon})}{-\log(1-\frac{0.414213}{\sqrt{n+1}})}\right\rceil$
    \vspace*{.1cm}\hrule\vspace*{.1cm}
    \textbf{Initialize}: $\Bar{x}\leftarrow e, \Bar{s}\leftarrow e$, $\sigma\leftarrow\max(1,Mx+q, -x^\top M x - x^\top q)$, $M\leftarrow \frac{1}{\sigma}M$, $q\leftarrow\frac{1}{\sigma}q$;\\
    \textbf{for} $k=1,\cdots{}, \mathcal{N}$ \textbf{do}
    \begin{enumerate}[label*=\arabic*., ref=\theenumi{}]
        \item $\Bar{r}\leftarrow \Bar{s}-\psi(\Bar{x})$\\
        \item $\Bar{\mu}\leftarrow\frac{\Bar{x}^\top\Bar{s}}{n+1}$\\
        \item calculate $(d_{\Bar{x}}, d_{\Bar{s}})$ by solving (\ref{eqn_Newton_direction})\\
        \item $\Bar{x}\leftarrow\Bar{x}+d_{\Bar{x}}$\\
        \item $\Bar{s}\leftarrow\psi(\Bar{x})+\gamma\Bar{r}$\\
    \end{enumerate}
    \textbf{end}\\
    \textbf{return }$(\Bar{x},\Bar{s})$;
\end{algorithm}

\subsection{Convergence analysis}\label{sec_convergence_analysis}
Before showing the convergence analysis, we need to prove the following lemmas.

\begin{lemma}\label{lemma_equality}
     Let $\psi(\Bar{x})$ be given by \eqref{eqn_psi}. Then, 
        for any $\Bar{x}\in\mathbb{R}^{n+1}_{++}$,
    \begin{equation}
    \begin{aligned}
\Bar{x}^\top\psi(\Bar{x})&= 0,\\
\Bar{x}^\top\nabla \psi(\Bar{x}) &= - \psi(\Bar{x})^\top.
            \end{aligned}
        \end{equation}
\end{lemma}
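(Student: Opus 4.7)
My plan is to prove both identities by direct substitution, since $\psi$ and $\nabla \psi$ have explicit closed forms from \eqref{eqn_psi} and \eqref{eqn_nabla_psi}. There is no real obstacle here; the work is bookkeeping, and the cancellations fall out because $\psi$ is essentially built to be orthogonal to $\bar{x}$.

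For the first identity, I would compute $\bar{x}^\top \psi(\bar{x}) = x^\top(Mx + q\tau) + \tau\bigl(-x^\top M x/\tau - x^\top q\bigr)$. The $x^\top M x$ terms cancel, and the $\tau x^\top q$ terms cancel, yielding $0$. This is the key structural reason the last entry of $\psi$ was defined as $-x^\top M x/\tau - x^\top q$ in the first place: it exactly balances the inner product.

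For the second identity, I would write $\bar{x}^\top \nabla \psi(\bar{x})$ as a row vector of two blocks using the block form of $\nabla \psi$ from \eqref{eqn_nabla_psi}. The first block (of length $n$) becomes
\begin{equation*}
x^\top M + \tau\bigl(-x^\top(M+M^\top)/\tau - q^\top\bigr) = -x^\top M^\top - \tau q^\top = -(Mx + q\tau)^\top,
\end{equation*}
which matches $-$ the first block of $\psi(\bar{x})^\top$. The second block (a scalar) becomes
\begin{equation*}
x^\top q + \tau \cdot x^\top M x / \tau^2 = x^\top q + x^\top M x / \tau,
\end{equation*}
which is $-(-x^\top M x/\tau - x^\top q)$, matching $-$ the last entry of $\psi(\bar{x})^\top$. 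Combining the two blocks yields $\bar{x}^\top \nabla \psi(\bar{x}) = -\psi(\bar{x})^\top$.

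Both identities therefore follow by direct computation from \eqref{eqn_psi} and \eqref{eqn_nabla_psi}, and no additional hypothesis beyond $\tau > 0$ (so that the divisions by $\tau$ and $\tau^2$ are defined) is required. Conceptually, the second identity is just the Euler-type relation one expects from the homogeneity of $\psi$ in $\bar{x}$, and the first identity is its specialization obtained by contracting with $\bar{x}$.
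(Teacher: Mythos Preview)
Your proposal is correct and follows exactly the approach the paper indicates: the paper's proof simply states that both identities are ``straightforward from the definitions $\psi(\Bar{x})$ \eqref{eqn_psi} and $\nabla \psi(\Bar{x})$ \eqref{eqn_nabla_psi},'' and your direct substitution carries out precisely that computation with the cancellations made explicit. Your added remark about the Euler-type homogeneity interpretation is a nice touch beyond what the paper provides.
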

\begin{proof}
    The proof is straightforward from the definitions $\psi(\Bar{x})$ \eqref{eqn_psi} and $\nabla \psi(\Bar{x})$ \eqref{eqn_nabla_psi}.
\end{proof}

\begin{lemma}\label{lemma_d_x_d_s}
    Let the direction $(d_{\Bar{x}},d_{\Bar{s}})$ be obtained from \eqref{eqn_Newton_direction}. Then, the equality
    \[
    d_{\Bar{x}}^\top d_{\Bar{s}} = d_{\Bar{x}}^\top \nabla\psi(\Bar{x}^k)d_{\Bar{x}} + \eta(1-\eta-\gamma)(\Bar{x}^k)^\top \Bar{s}^k
    \]
    holds. 
    
    Furthermore, by letting $\gamma=1-\eta$, we have
    \begin{equation}
\quad\left(\sqrt{\frac{\Bar{s}^k}{\Bar{x}^k}}\pdot d_{\Bar{x}}\right)^{\!\!\top} \left(\sqrt{\frac{\Bar{x}^k}{\Bar{s}^k}}\pdot d_{\Bar{s}}\right) = d_{\Bar{x}}^\top d_{\Bar{s}}\geq0    
    \end{equation}
    and
    \begin{equation}\label{eqn_core}
\left\|\sqrt{\frac{\Bar{s}^k}{\Bar{x}^k}}\pdot d_{\Bar{x}}\right\|^2 + \left\|\sqrt{\frac{\Bar{x}^k}{\Bar{s}^k}}\pdot d_{\Bar{s}}\right\|^2\leq\left\|\sqrt{\frac{\Bar{s}^k}{\Bar{x}^k}}\pdot d_{\Bar{x}}+ \sqrt{\frac{\Bar{x}^k}{\Bar{s}^k}}\pdot d_{\Bar{s}} \right\|^2.
    \end{equation}
\end{lemma}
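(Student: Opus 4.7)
The plan is to prove the three statements in turn, with the first equality doing all the real work and the other two following easily.

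First I would derive the identity
\[
d_{\bar x}^\top d_{\bar s} = d_{\bar x}^\top \nabla\psi(\bar x^k) d_{\bar x} + \eta(1-\eta-\gamma)(\bar x^k)^\top \bar s^k.
\]
The starting point is equation \eqref{eqn_Newton_direction_1}, which gives $d_{\bar s} = \nabla\psi(\bar x^k) d_{\bar x} - \eta \bar r^k$. Taking the inner product with $d_{\bar x}$ reduces everything to computing $d_{\bar x}^\top \bar r^k$. The key step, and the only tricky computation, is showing that
\[
d_{\bar x}^\top \bar r^k = (\eta+\gamma-1)(\bar x^k)^\top \bar s^k.
\]
To get this I would sum the components of \eqref{eqn_Newton_direction_2}, yielding $(\bar x^k)^\top d_{\bar s}+(\bar s^k)^\top d_{\bar x}=(\gamma-1)(\bar x^k)^\top \bar s^k$ (using $\bar\mu^k(n+1)=(\bar x^k)^\top \bar s^k$), and then left-multiply \eqref{eqn_Newton_direction_1} by $(\bar x^k)^\top$. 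The second identity in Lemma \ref{lemma_equality} converts $(\bar x^k)^\top \nabla\psi(\bar x^k)$ into $-\psi(\bar x^k)^\top$, and then $\psi(\bar x^k)=\bar s^k-\bar r^k$ together with the first identity $(\bar x^k)^\top\psi(\bar x^k)=0$ collapses the expression to the required formula. Substituting back yields the first claim.

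Next, setting $\gamma=1-\eta$ immediately kills the second term, leaving $d_{\bar x}^\top d_{\bar s} = d_{\bar x}^\top \nabla\psi(\bar x^k) d_{\bar x}$, which is nonnegative by Lemma \ref{lemma_psd}. The scaled inner-product equality is simply the observation that the componentwise factors $\sqrt{\bar s^k_i/\bar x^k_i}$ and $\sqrt{\bar x^k_i/\bar s^k_i}$ multiply to one, so
\[
\Bigl(\sqrt{\bar s^k/\bar x^k}\pdot d_{\bar x}\Bigr)^{\!\!\top}\!\Bigl(\sqrt{\bar x^k/\bar s^k}\pdot d_{\bar s}\Bigr) = d_{\bar x}^\top d_{\bar s}\geq 0.
\]

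Finally, for the norm inequality \eqref{eqn_core}, I would set $u = \sqrt{\bar s^k/\bar x^k}\pdot d_{\bar x}$ and $v = \sqrt{\bar x^k/\bar s^k}\pdot d_{\bar s}$ and expand $\|u+v\|^2 = \|u\|^2+\|v\|^2+2u^\top v$. Since we just showed $u^\top v = d_{\bar x}^\top d_{\bar s}\geq 0$, the inequality follows immediately. The main obstacle is only the bookkeeping in the first step, specifically handling $(\bar x^k)^\top \bar r^k$ through Lemma \ref{lemma_equality}; everything else is routine once that identity is in hand.
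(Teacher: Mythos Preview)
Your proposal is correct and follows essentially the same route as the paper: premultiply \eqref{eqn_Newton_direction_1} by $d_{\bar x}^\top$ and by $(\bar x^k)^\top$, sum \eqref{eqn_Newton_direction_2}, and invoke both identities of Lemma~\ref{lemma_equality} (together with $\bar r^k=\bar s^k-\psi(\bar x^k)$) to reduce $d_{\bar x}^\top \bar r^k$; the only cosmetic difference is that you explicitly isolate $d_{\bar x}^\top \bar r^k$ as the intermediate quantity, whereas the paper carries $d_{\bar x}^\top\psi(\bar x^k)$ instead. The second and third parts (nonnegativity from Lemma~\ref{lemma_psd}, and expanding $\|u+v\|^2$) match the paper exactly.
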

\begin{proof}
    Premultiplying each side of (\ref{eqn_Newton_direction_1}) by $d_{\Bar{x}}^\top$ gives
\begin{equation}\label{eqn_dx_ds_eta}
    d_{\Bar{x}}^\top d_{\Bar{s}} - d_{\Bar{x}}^\top \nabla\psi(\Bar{x}^k)d_{\Bar{x}} = -\eta d_{\Bar{x}}^\top \Bar{r}^k
    \end{equation}
    and premultiplying each side of (\ref{eqn_Newton_direction_1}) by $(\Bar{x}^k)^\top$ gives
\begin{equation}\label{eqn_x_ds}
(\Bar{x}^k)^\top d_{\Bar{s}} -(\Bar{x}^k)^\top \nabla\psi(\Bar{x}^k)d_{\Bar{x}} = -\eta(\Bar{x}^k)^\top\Bar{r}^k.   
    \end{equation}
    By Lemma \ref{lemma_equality}, we have $(\Bar{x}^k)^\top \psi(\Bar{x}^k)=0$ and $(\Bar{x}^k)^\top \nabla\psi(\Bar{x}^k)=-\psi(\Bar{x}^k)^\top$,
    which can reduce \eqref{eqn_x_ds} to
\begin{equation}\label{eqn_x_ds_reduction}
    \begin{aligned}
     (\Bar{x}^k)^\top d_{\Bar{s}} +\psi(\Bar{x}^k)^\top d_{\Bar{x}} &= -\eta(\Bar{x}^k)^\top \Bar{r}^k\\
        &=-\eta(\Bar{x}^k)^\top\! \left(\Bar{s}^k-\psi(\Bar{x}^k)\right)\\
        &=-\eta(\Bar{x}^k)^\top \Bar{s}^k.
    \end{aligned} 
    \end{equation}
    Also, \eqref{eqn_dx_ds_eta} leads to
    \[
    \begin{aligned}
     d_{\Bar{x}}^\top d_{\Bar{s}} &= d_{\Bar{x}}^\top \nabla\psi(\Bar{x}^k)d_{\Bar{x}}-\eta d_{\Bar{x}}^\top\! \left(\Bar{s}^k-\psi(\Bar{x}^k)\right)  \\
        &=d_{\Bar{x}}^\top \nabla\psi(\Bar{x}^k)d_{\Bar{x}}-\eta\left(d_{\Bar{x}}^\top \Bar{s}^k-d_{\Bar{x}}^\top \psi(\Bar{x}^k)\right)\\
    &=d_{\Bar{x}}^\top \nabla\psi(\Bar{x}^k)d_{\Bar{x}}-\eta\left(d_{\Bar{x}}^\top \Bar{s}^k+(\Bar{x}^k)^\top d_{\Bar{s}}+\eta(\Bar{x}^k)^T\Bar{s}^k\right)\\
        &\quad\quad \text{(by \eqref{eqn_x_ds_reduction})}\\
    &=d_{\Bar{x}}^\top \nabla\psi(\Bar{x}^k)d_{\Bar{x}}-\eta\left[e^\top (\gamma\Bar{\mu}^ke-\Bar{x}^k\Bar{s}^k)+\eta(\Bar{x}^k)^\top \Bar{s}^k\right]\\
        &\quad\quad \text{(by \eqref{eqn_Newton_direction_2})}\\
    &=d_{\Bar{x}}^\top \nabla\psi(\Bar{x}^k)d_{\Bar{x}}-\eta\left[\gamma(\Bar{x}^k)^\top \Bar{s}^k-(\Bar{x}^k)^\top \Bar{s}^k + \eta(\Bar{x}^k)^\top \Bar{s}^k\right]\\
    &=d_{\Bar{x}}^\top \nabla\psi(\Bar{x}^k)d_{\Bar{x}}+\eta\left(1-\gamma-\eta\right)(\Bar{x}^k)^\top \Bar{s}^k,
    \end{aligned}
    \]
    which completes the first part of the proof. 
    
    Then, by letting $\gamma=1-\eta$ and by Lemma \ref{lemma_psd} ($\nabla \psi(\Bar{x}^k)\succeq0$), the above equality results in 
    \[
    \begin{aligned}
       &\quad\left(\sqrt{\frac{\Bar{s}^k}{\Bar{x}^k}}\pdot d_{\Bar{x}}\right)^{\!\!\top} \left(\sqrt{\frac{\Bar{x}^k}{\Bar{s}^k}}\pdot d_{\Bar{s}}\right) = d_{\Bar{x}}^\top d_{\Bar{s}} \\
&\qquad\qquad\qquad 
\qquad\qquad
\qquad\quad\ =d_{\Bar{x}}^\top \nabla\psi(\Bar{x}^k)d_{\Bar{x}}\geq0,
    \end{aligned}
    \]
    which can be used to derive the inequality
    \[
\left\|\sqrt{\frac{\Bar{s}^k}{\Bar{x}^k}}\pdot d_{\Bar{x}}\right\|^2 + \left\|\sqrt{\frac{\Bar{x}^k}{\Bar{s}^k}}\pdot d_{\Bar{s}}\right\|^2\leq\left\|\sqrt{\frac{\Bar{s}^k}{\Bar{x}^k}}\pdot d_{\Bar{x}}+ \sqrt{\frac{\Bar{x}^k}{\Bar{s}^k}}\pdot d_{\Bar{s}} \right\|^2.
    \]
This completes the second part of the proof.
\end{proof}

In our proposed Algorithm \ref{alg_fullNewton}, Step 4 ($\Bar{x}\leftarrow\Bar{x}+d_{\Bar{x}}$) and Step 5 ($\Bar{s}\leftarrow\psi(\Bar{x})+\gamma\Bar{r}$) are derived from the two different update equations for $\Bar{x}$ and $\Bar{s}$:
\begin{equation}\label{eqn_bar_x_update}
\Bar{x}^+\triangleq\Bar{x}^k+\alpha d_{\Bar{x}}>0
\end{equation}
and
\begin{equation}\label{eqn_bar_s_update}
    \begin{aligned}
    \Bar{s}^+&\triangleq\Bar{s}^k+\alpha d_{\Bar{s}} + \psi(\Bar{x}^+)-\psi(\Bar{x}^k)-\alpha\nabla\psi(\Bar{x}^k)d_{\Bar{x}}\\
&=\psi(\Bar{x}^+)+\left(\Bar{s}^k-\psi(\Bar{x}^k)\right) +\alpha\left(d_{\Bar{s}}-\nabla\psi(\Bar{x}^k)d_{\Bar{x}}\right)\\
&=\psi(\Bar{x}^+)+\Bar{r}^k-\alpha\eta\Bar{r}^k \\
        &= \psi(\Bar{x}^+)+ (1-\alpha\eta)\Bar{r}^k
    \end{aligned}
\end{equation}
when choosing the step size $\alpha=1$ and $\eta=1-\gamma$.

\begin{remark}\label{remark_residual}
    Thanks to this special update \eqref{eqn_bar_s_update} for $\Bar{s}$, which was also suggested in \cite{monteiro1990extension} for the feasible case, the next iterate of the residual vector $\Bar{r}^+$ satisfies
    \begin{equation}\label{eqn_residual_update}
    \begin{aligned}
        \Bar{r}^+ &\triangleq \Bar{s}^+ -\psi(\Bar{x}^+)\\
        &= (1-\alpha\eta)\Bar{r}^k.
    \end{aligned}
    \end{equation}
\end{remark}

Then, we have the following Lemma.
\begin{lemma}\label{lemma_x_s_update}
    If the next iterate $(\Bar{x}^+,\Bar{s}^+)$ is given by (\ref{eqn_bar_x_update}) and (\ref{eqn_bar_s_update}), then
    \[
    (\Bar{x}^+)^\top \Bar{s}^+ = \left(1-\alpha(1-\gamma)\right)(\Bar{x}^k)^\top \Bar{s}^k+\alpha^2\eta(1-\eta-\gamma)(\Bar{x}^k)^\top \Bar{s}^k.
    \]
    Furthermore, by letting $\eta=1-\gamma$, the residual $\Bar{r}$ (infeasibility measure) and the complementarity gap $(\Bar{x})^\top\Bar{s}$ are reduced at the same rate $(1-\alpha\eta)$.
\end{lemma}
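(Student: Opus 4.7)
The plan is to exploit the homogeneous identity $\Bar{x}^\top\psi(\Bar{x})=0$ from Lemma~\ref{lemma_equality}, applied at both $\Bar{x}^k$ and the new iterate $\Bar{x}^+$, to collapse the bilinear product $(\Bar{x}^+)^\top \Bar{s}^+$ down to an inner product involving the residual $\Bar{r}^k$. First, I would rewrite $\Bar{s}^+$ in the closed form derived in \eqref{eqn_bar_s_update}, namely $\Bar{s}^+=\psi(\Bar{x}^+)+(1-\alpha\eta)\Bar{r}^k$, and then multiply by $\Bar{x}^+$. Since $\Bar{x}^+\in\mathbb{R}^{n+1}_{++}$ (the full Newton step is assumed to maintain positivity), Lemma~\ref{lemma_equality} gives $(\Bar{x}^+)^\top\psi(\Bar{x}^+)=0$, leaving
\[
(\Bar{x}^+)^\top \Bar{s}^+=(1-\alpha\eta)(\Bar{x}^+)^\top\Bar{r}^k.
\]

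Next, I would substitute $\Bar{x}^+=\Bar{x}^k+\alpha d_{\Bar{x}}$ and split into two contributions. For $(\Bar{x}^k)^\top\Bar{r}^k=(\Bar{x}^k)^\top(\Bar{s}^k-\psi(\Bar{x}^k))$, Lemma~\ref{lemma_equality} again eliminates the $\psi$ term and leaves $(\Bar{x}^k)^\top\Bar{s}^k$. For $d_{\Bar{x}}^\top\Bar{r}^k$, I would solve \eqref{eqn_Newton_direction_1} for $\Bar{r}^k=\eta^{-1}(\nabla\psi(\Bar{x}^k)d_{\Bar{x}}-d_{\Bar{s}})$ and then invoke Lemma~\ref{lemma_d_x_d_s}, which gives $d_{\Bar{x}}^\top d_{\Bar{s}} - d_{\Bar{x}}^\top\nabla\psi(\Bar{x}^k)d_{\Bar{x}}=\eta(1-\eta-\gamma)(\Bar{x}^k)^\top\Bar{s}^k$. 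Hence $d_{\Bar{x}}^\top\Bar{r}^k=-(1-\eta-\gamma)(\Bar{x}^k)^\top\Bar{s}^k$, which is the key quantity that injects the quadratic-in-$\alpha$ correction.

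Combining these pieces,
\[
(\Bar{x}^+)^\top\Bar{s}^+=(1-\alpha\eta)\bigl[1-\alpha(1-\eta-\gamma)\bigr](\Bar{x}^k)^\top\Bar{s}^k,
\]
and expanding the product cancels the $\alpha\eta$ cross terms and yields exactly $\bigl(1-\alpha(1-\gamma)\bigr)(\Bar{x}^k)^\top\Bar{s}^k+\alpha^2\eta(1-\eta-\gamma)(\Bar{x}^k)^\top\Bar{s}^k$, which is the claimed identity. The main obstacle is making sure the bookkeeping in this expansion is correct; fortunately the two Lemma~\ref{lemma_equality} cancellations and the one substitution from Lemma~\ref{lemma_d_x_d_s} are all that is needed, so the calculation is short.

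Finally, for the second assertion, setting $\eta=1-\gamma$ kills the factor $(1-\eta-\gamma)$ and the identity collapses to $(\Bar{x}^+)^\top\Bar{s}^+=(1-\alpha\eta)(\Bar{x}^k)^\top\Bar{s}^k$. On the other hand, Remark~\ref{remark_residual} (equation \eqref{eqn_residual_update}) already gives $\Bar{r}^+=(1-\alpha\eta)\Bar{r}^k$, so both the residual and the complementarity gap contract by the common factor $(1-\alpha\eta)$, which completes the argument.
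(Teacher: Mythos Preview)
Your proof is correct. It takes a somewhat different and cleaner route than the paper's. The paper expands $(\Bar{x}^+)^\top\Bar{s}^+$ using the full definition \eqref{eqn_bar_s_update} of $\Bar{s}^+$, then applies both identities of Lemma~\ref{lemma_equality} (including $(\Bar{x}^k)^\top\nabla\psi(\Bar{x}^k)=-\psi(\Bar{x}^k)^\top$) together with \eqref{eqn_Newton_direction_2} to simplify $d_{\Bar{x}}^\top\Bar{s}^k+d_{\Bar{s}}^\top\Bar{x}^k$, arriving at the final expression directly. You instead invoke the closed form $\Bar{s}^+=\psi(\Bar{x}^+)+(1-\alpha\eta)\Bar{r}^k$ at the outset, which lets a single application of $(\Bar{x}^+)^\top\psi(\Bar{x}^+)=0$ reduce everything to $(1-\alpha\eta)(\Bar{x}^+)^\top\Bar{r}^k$; the remaining work is then just evaluating $(\Bar{x}^k)^\top\Bar{r}^k$ and $d_{\Bar{x}}^\top\Bar{r}^k$ via Lemma~\ref{lemma_equality} and Lemma~\ref{lemma_d_x_d_s}. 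Your approach avoids \eqref{eqn_Newton_direction_2} and the second identity of Lemma~\ref{lemma_equality} entirely, and yields the pleasant factored form $(1-\alpha\eta)\bigl[1-\alpha(1-\eta-\gamma)\bigr](\Bar{x}^k)^\top\Bar{s}^k$ before expanding, which makes the specialization $\eta=1-\gamma$ immediate. One small remark: your parenthetical about positivity of $\Bar{x}^+$ is not really needed for the algebraic identity $\Bar{x}^\top\psi(\Bar{x})=0$, which holds whenever $\tau\neq0$; the paper's proof relies on the same cancellation without comment.
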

\begin{proof}
Based on the updates of $\Bar{x}$ \eqref{eqn_bar_x_update} and $\Bar{s}$ \eqref{eqn_bar_s_update}, we have
    \[
    \begin{aligned}
        &\quad(\Bar{x}^+)^\top \Bar{s}^+ \\
        &=  (\Bar{x}^+)^\top \left(  \Bar{s}^k+\alpha d_{\Bar{s}} + \psi(\Bar{x}^+)-\psi(\Bar{x}^k)-\alpha\nabla\psi(\Bar{x}^k)d_{\Bar{x}}\right)\\
        &=(\Bar{x}^+)^\top \left(\Bar{s}^k+\alpha d_{\Bar{s}}\right) - (\Bar{x}^k+\alpha d_{\Bar{x}})^\top \left(\psi(\Bar{x}^k)+\alpha\nabla\psi(\Bar{x}^k)d_{\Bar{x}}\right)\\
        &=(\Bar{x}^+)^\top \left(\Bar{s}^k+\alpha d_{\Bar{s}}\right) - (\Bar{x}^k)^\top\psi(\Bar{x}^k) - \alpha(d_{\Bar{x}})^\top \psi(\Bar{x}^k) \\
        &\quad -\alpha(\Bar{x}^k)^\top \nabla\psi(\Bar{x}^k)d_{\Bar{x}} -\alpha^2d_{\Bar{x}}^\top \nabla\psi(\Bar{x}^k)d_{\Bar{x}}\\
         &\quad\quad (\text{by Lemma \ref{lemma_equality}})\\
        &=(\Bar{x}^+)^\top \left(\Bar{s}^k+\alpha d_{\Bar{s}}\right)  - \alpha^2d_{\Bar{x}}^\top \nabla\psi(\Bar{x}^k)d_{\Bar{x}}\\
        &=(\Bar{x}^k+\alpha d_{\Bar{x}})^\top \left(\Bar{s}^k+\alpha d_{\Bar{s}}\right)  - \alpha^2d_{\Bar{x}}^\top \nabla\psi(\Bar{x}^k)d_{\Bar{x}}\\
        &=(\Bar{x}^k)^\top \Bar{s}^k+\alpha\left(d_{\Bar{x}}^\top \Bar{s}^k+d_{\Bar{s}}^\top \Bar{x}^k\right) + \alpha^2\left( d_{\Bar{x}}^\top d_{\Bar{s}}-d_{\Bar{x}}^\top \nabla\psi(\Bar{x}^k)d_{\Bar{x}}\right)\\
        &=(\Bar{x}^k)^\top \Bar{s}^k+\alpha\left(d_{\Bar{x}}^\top \Bar{s}^k+d_{\Bar{s}}^\top \Bar{x}^k\right) + \alpha^2\eta(1-\eta-\gamma)(\Bar{x}^k)^\top \Bar{s}^k\\
        &\quad\quad (\text{by Lemma}~\ref{lemma_d_x_d_s})\\
        &=\left(1-\alpha(1-\gamma)\right)(\Bar{x}^k)^\top \Bar{s}^k + \alpha^2\eta(1-\eta-\gamma)(\Bar{x}^k)^\top \Bar{s}^k.\\
        &\quad\quad (\text{by \eqref{eqn_Newton_direction_2}})\\
    \end{aligned}
    \]
    which completed the first part of the proof. 
    
    Furthermore, by letting $\eta=1-\gamma$, the update of the complementarity gap $\Bar{x}^\top \Bar{s}$ is
\begin{equation}\label{eqn_x_s_update}
    (\Bar{x}^+)^\top \Bar{s}^+  = (1-\alpha\eta)(\Bar{x}^k)^\top \Bar{s}^k,
    \end{equation}
    which is the same as the update (\ref{eqn_residual_update}) of the infeasibility residual $\Bar{r}$. This completes the second part of the proof.
\end{proof}

\begin{lemma}\label{lemma_x_next_positive}
    Suppose that $\|\Bar{x}^k\pdot\Bar{s}^k-\Bar{\mu}^ke\|\leq\beta\Bar{\mu}^k$ where $0<\beta<\sqrt{2}-1$ (e.g., $\beta=0.414213$); letting $\eta=\frac{\beta}{\sqrt{n+1}}$, $\gamma=1-\eta$, and adopting the full Newton step ($\alpha=1$), then $\Bar{x}^+>0$.
\end{lemma}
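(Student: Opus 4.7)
The plan is to work in scaled coordinates and reduce the positivity of $\bar{x}^+$ to a componentwise inequality that can be controlled by a scalar norm bound. Define $v \triangleq \sqrt{\bar{x}^k\pdot\bar{s}^k}$ and introduce the scaled directions
\[
d_x' \triangleq \sqrt{\bar{s}^k/\bar{x}^k}\pdot d_{\bar{x}}, \qquad d_s' \triangleq \sqrt{\bar{x}^k/\bar{s}^k}\pdot d_{\bar{s}}.
\]
Elementary manipulation then gives $\bar{x}^+ = \sqrt{\bar{x}^k/\bar{s}^k}\pdot(v+d_x')$, so since the prefactor is strictly positive, the claim $\bar{x}^+>0$ is equivalent to $v_i + (d_x')_i > 0$ for every $i$, which I will establish by showing $\|d_x'\|_\infty < \min_i v_i$.

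Next I would extract an explicit form for $d_x'+d_s'$ from the centering equation \eqref{eqn_Newton_direction_2}. Dividing \eqref{eqn_Newton_direction_2} componentwise by $v$ and using $\gamma=1-\eta$ yields
\[
d_x' + d_s' \;=\; \frac{\gamma\bar{\mu}^k - v\pdot v}{v}.
\]
Combined with the ``core'' inequality \eqref{eqn_core} from Lemma \ref{lemma_d_x_d_s}, this gives the scalar bound $\|d_x'\|^2 \leq \|d_x'+d_s'\|^2 = \sum_i (\gamma\bar{\mu}^k - v_i^2)^2/v_i^2$.

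The main technical step, and the one I expect to be slightly delicate, is bounding the right-hand side using the proximity hypothesis $\|v\pdot v-\bar{\mu}^k e\|\leq\beta\bar{\mu}^k$. The key trick is to write $\gamma\bar{\mu}^k - v_i^2 = -\eta\bar{\mu}^k + (\bar{\mu}^k - v_i^2)$ and exploit the algebraic identity $\sum_i(\bar{\mu}^k - v_i^2) = 0$, which follows from the definition $\bar{\mu}^k=(\bar{x}^k)^\top\bar{s}^k/(n+1)$. This causes the cross term to vanish upon squaring and summing, leaving
\[
\sum_i (\gamma\bar{\mu}^k - v_i^2)^2 = (n+1)\eta^2(\bar{\mu}^k)^2 + \|v\pdot v - \bar{\mu}^k e\|^2 \leq 2\beta^2(\bar{\mu}^k)^2,
\]
where the choice $\eta=\beta/\sqrt{n+1}$ makes the first term equal to $\beta^2(\bar{\mu}^k)^2$. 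Dividing by $\min_i v_i^2 \geq (1-\beta)\bar{\mu}^k$ then yields $\|d_x'\|^2 \leq 2\beta^2\bar{\mu}^k/(1-\beta)$.

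To finish, I would compare this bound against $\min_i v_i^2 \geq (1-\beta)\bar{\mu}^k$: the inequality $\|d_x'\| < \min_i v_i$ reduces to $2\beta^2/(1-\beta) < 1-\beta$, equivalently $\beta(\sqrt{2}+1)<1$, i.e., $\beta<\sqrt{2}-1$. This is exactly the hypothesis of the lemma (and is satisfied by $\beta=0.414213<\sqrt{2}-1$), so $v_i+(d_x')_i\geq v_i-\|d_x'\|>0$ for every $i$, which gives $\bar{x}^+>0$ and completes the proof.
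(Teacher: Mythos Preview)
Your argument is correct and follows essentially the same route as the paper: both introduce the scaled directions $d_x'=\sqrt{\bar{s}^k/\bar{x}^k}\pdot d_{\bar{x}}$ and $d_s'=\sqrt{\bar{x}^k/\bar{s}^k}\pdot d_{\bar{s}}$, use the orthogonality $e^\top(\bar{x}^k\pdot\bar{s}^k-\bar{\mu}^k e)=0$ to kill the cross term and obtain $\|d_x'+d_s'\|^2\le 2\beta^2\bar{\mu}^k/(1-\beta)$, invoke the core inequality \eqref{eqn_core} from Lemma~\ref{lemma_d_x_d_s}, and reduce positivity to the scalar condition $\sqrt{2}\beta/(1-\beta)<1$. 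The only cosmetic difference is that the paper phrases the conclusion as $\|d_{\bar{x}}/\bar{x}^k\|<1$ whereas you phrase it as $\|d_x'\|<\min_i v_i$; since $d_{\bar{x}}/\bar{x}^k=d_x'/v$, these are the same bound.
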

\begin{proof}
Since $\|\Bar{x}^k\pdot\Bar{s}^k-\Bar{\mu}^ke\|\leq\beta\Bar{\mu}^k$ where $0<\beta<\sqrt{2}-1$, we have that
\[
\|\Bar{x}^k\pdot\Bar{s}^k-\Bar{\mu}^ke\|_\infty\leq\|\Bar{x}^k\pdot\Bar{s}^k-\Bar{\mu}^ke\|\leq\beta\Bar{\mu}^k,
\]
that is,
\[
-\beta\Bar{\mu}^ke\leq\Bar{x}^k\pdot\Bar{s}^k-\Bar{\mu}^ke\leq\beta\Bar{\mu}^ke
\]
Thus, we have
\begin{equation}\label{eqn_xs_geq_mu}
\min(\Bar{x}^k\pdot\Bar{s}^k)\geq(1-\beta)\Bar{\mu}^k.  
\end{equation}
As $\alpha=1$ and by Lemma \ref{lemma_x_s_update}, specifically \eqref{eqn_x_s_update}, we have 
\begin{equation}\label{eqn_mu_update}
  \Bar{\mu}^+=\gamma\Bar{\mu}^k,
\end{equation}
where $\Bar{\mu}^+=\frac{(\Bar{x}^+)^\top \Bar{s}^+}{n+1}$. Therefore, (\ref{eqn_Newton_direction_2}) is equivalent to the equality
\[
\Bar{s}^k\pdot d_{\Bar{x}} + \Bar{x}^k\pdot d_{\Bar{s}} = \Bar{\mu}^+ e -\Bar{x}^k\pdot\Bar{s}^k,
\]
which can also be reformulated as
\begin{equation}\label{eqn_sqrt_dx_ds}
\sqrt{\frac{\Bar{s}^k}{\Bar{x}^k}}\pdot d_{\Bar{x}}+ \sqrt{\frac{\Bar{x}^k}{\Bar{s}^k}}\pdot d_{\Bar{s}} = -\frac{\Bar{x}^k\pdot\Bar{s}^k-\Bar{\mu}^+ e}{\sqrt{\Bar{x}^k\pdot\Bar{s}^k}}.
\end{equation}
As a result,
\begin{equation}\label{eqn_d_x_d_s_inequality}
\begin{aligned}
&\left\|\sqrt{\frac{\Bar{s}^k}{\Bar{x}^k}}\pdot d_{\Bar{x}}+ \sqrt{\frac{\Bar{x}^k}{\Bar{s}^k}}\pdot d_{\Bar{s}} \right\|^2=\left\|\frac{\Bar{x}^k\pdot\Bar{s}^k-\Bar{\mu}^+ e}{\sqrt{\Bar{x}^k\pdot\Bar{s}^k}}\right\|^2\\
&\leq\frac{\|\Bar{x}^k\pdot\Bar{s}^k-\Bar{\mu}^+ e\|^2}{\min\!\big(\sqrt{\Bar{x}^k\pdot\Bar{s}^k}\,\big)^2}\leq\frac{\|\Bar{x}^k\pdot\Bar{s}^k-\Bar{\mu}^+ e\|^2}{(1-\beta)\Bar{\mu}^k}\quad\text{(by \eqref{eqn_xs_geq_mu})}\\
&=\frac{\|\Bar{x}^k\pdot\Bar{s}^k -\Bar{\mu}^ke+(1-\gamma)\Bar{\mu}^ke\|^2}{(1-\beta)\Bar{\mu}^k}\quad\text{(by \eqref{eqn_mu_update})}\\
&=\frac{\|\Bar{x}^k\pdot\Bar{s}^k -\Bar{\mu}^ke\|^2 + \|(1-\gamma)\Bar{\mu}^ke\|^2}{(1-\beta)\Bar{\mu}^k}\\
&\quad\text{(as $e^\top(\Bar{x}^k\Bar{s}^k -\Bar{\mu}^ke)=0$)}\\
&\leq\frac{\beta^2(\Bar{\mu}^k)^2 + \eta^2(n+1)(\Bar{\mu}^k)^2}{(1-\beta)\Bar{\mu}^k}\text{(as $\eta=\frac{\beta}{\sqrt{n+1}}$)}\\
&=\frac{2\beta^2(\Bar{\mu}^k)^2}{(1-\beta)\Bar{\mu}^k}=\frac{2\beta^2\Bar{\mu}^k}{1-\beta}.
\end{aligned}
\end{equation}

Therefore, by Lemma \ref{lemma_d_x_d_s} (namely \eqref{eqn_core}), we have that
\[
\left\|\sqrt{\frac{\Bar{s}^k}{\Bar{x}^k}}\pdot d_{\Bar{x}}\right\|^2 + \left\|\sqrt{\frac{\Bar{x}^k}{\Bar{s}^k}}\pdot d_{\Bar{s}}\right\|^2 \leq \frac{2\beta^2\Bar{\mu}^k }{1-\beta}.
\]
Then we have that
\[
\begin{aligned}
\left\| \frac{d_{\Bar{x}}}{\Bar{x}^k}\right\| &= \left\|\frac{1}{\sqrt{\Bar{x}^k\pdot\Bar{s}^k}}\sqrt{\frac{\Bar{s}^k}{\Bar{x}^k}}\pdot d_{\Bar{x}} \right\|\\
&\leq\frac{\left\|\sqrt{\frac{\Bar{s}^k}{\Bar{x}^k}}\pdot d_{\Bar{x}} \right\|}{\min\!\big(\sqrt{\Bar{x}^k\pdot\Bar{s}^k}\,\big)}  \leq
\frac{\sqrt{\left\|\sqrt{\frac{\Bar{s}^k}{\Bar{x}^k}}\pdot d_{\Bar{x}} \right\|^2+\left\|\sqrt{\frac{\Bar{x}^k}{\Bar{s}^k}}\pdot d_{\Bar{s}} \right\|^2}}{\min\!\big(\sqrt{\Bar{x}^k\pdot\Bar{s}^k}\,\big)}\\
&\leq\frac{\sqrt{\frac{2\beta^2\Bar{\mu}^k}{1-\beta}}}{\sqrt{(1-\beta)\Bar{\mu}^k}}\quad\quad\text{(by \eqref{eqn_xs_geq_mu}))}\\
&=\frac{\sqrt{2}\beta}{1-\beta}<1\quad\quad\text{(as $0<\beta<\sqrt{2}-1$)}
\end{aligned}
\]
Therefore, 
\[
\left\|\frac{d_{\Bar{x}}}{\Bar{x}^k}\right\|_\infty<1,
\]
which proves $\Bar{x}^+=\Bar{x}^k+d_{\Bar{x}}>0$. The proof is complete.
\end{proof}

\begin{lemma}\label{lemma_s_next_positive}
    Suppose that $\|\Bar{x}^k\pdot\Bar{s}^k-\Bar{\mu}^ke\|\leq\beta\Bar{\mu}^k$ where $0<\beta<\sqrt{2}-1$ (e.g., $\beta=0.414213$); letting $\eta=\frac{\beta}{\sqrt{n+1}}$, $\gamma=1-\eta$, and adopting the full Newton step ($\alpha=1$), then
    \[
    \|\Bar{x}^+\pdot\Bar{s}^+-\Bar{\mu}^+e\|\leq\beta\Bar{\mu}^+,
    \]
    where $\Bar{\mu}^+=\frac{(\Bar{x}^+)^\top \Bar{s}^+}{n+1}$,
    and $ \Bar{s}^+>0.$
\end{lemma}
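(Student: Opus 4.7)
The plan is to first obtain a closed-form expression for $\Bar{x}^+\pdot\Bar{s}^+-\Bar{\mu}^+e$ by exploiting the special structure of $\psi$, and then to bound its Euclidean norm via the scaled variables from Lemma \ref{lemma_d_x_d_s}. Write $\Bar{s}^+=\Bar{s}^k+d_{\Bar{s}}+g$ with $g\triangleq\psi(\Bar{x}^+)-\psi(\Bar{x}^k)-\nabla\psi(\Bar{x}^k)d_{\Bar{x}}$ (the equivalent form of Step 5 when $\alpha=1$ and $\eta=1-\gamma$, cf.\ \eqref{eqn_bar_s_update}). Because only the $(n{+}1)$-th row of $\psi(x,\tau)$ is nonlinear, the first $n$ entries of $g$ vanish, so $g=g_{n+1}e_{n+1}$. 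A short calculation using Lemma \ref{lemma_equality} together with Lemma \ref{lemma_d_x_d_s} (under $\gamma=1-\eta$, so $d_{\Bar{x}}^\top d_{\Bar{s}}=d_{\Bar{x}}^\top\nabla\psi(\Bar{x}^k)d_{\Bar{x}}\geq 0$) yields $(\Bar{x}^+)^\top g=-d_{\Bar{x}}^\top d_{\Bar{s}}$, and hence $\Bar{x}^+\pdot g=-(d_{\Bar{x}}^\top d_{\Bar{s}})\,e_{n+1}$. Combining with the Newton-step identity $(\Bar{x}^k+d_{\Bar{x}})\pdot(\Bar{s}^k+d_{\Bar{s}})=\Bar{\mu}^+e+d_{\Bar{x}}\pdot d_{\Bar{s}}$ (from \eqref{eqn_Newton_direction_2} and $\Bar{\mu}^+=\gamma\Bar{\mu}^k$) then produces the clean form
\[
\Bar{x}^+\pdot\Bar{s}^+-\Bar{\mu}^+e=(d_x\pdot d_s,\,-d_x^\top d_s)\in\mathbb{R}^{n+1}.
\]

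To bound this vector, introduce the scaled directions $p\triangleq\sqrt{\Bar{s}^k/\Bar{x}^k}\pdot d_{\Bar{x}}$ and $q\triangleq\sqrt{\Bar{x}^k/\Bar{s}^k}\pdot d_{\Bar{s}}$, split as $p=\mathrm{col}(p_1,p_\tau)$ and $q=\mathrm{col}(q_1,q_\kappa)$. A direct check gives $d_x\pdot d_s=p_1\pdot q_1$ and $d_x^\top d_s=p_1^\top q_1$, hence $\|\Bar{x}^+\pdot\Bar{s}^+-\Bar{\mu}^+e\|^2=\|p_1\pdot q_1\|^2+(p_1^\top q_1)^2$. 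The central technical inequality to prove is
\[
\|p_1\pdot q_1\|^2+(p_1^\top q_1)^2\leq\tfrac{1}{8}\|p+q\|^4\qquad\text{whenever}\quad p^\top q\geq 0.
\]
I would establish it through the substitution $u=p+q$, $v=p-q$: after multiplying by $16$, the inequality reduces to showing $B(B-A)\leq c_\tau^2(2A+c_\tau^2)$, where $A=\|u_{1:n}\|^2$, $B=\|v_{1:n}\|^2$, $c_\tau=u_{n+1}$, under the constraint $A+c_\tau^2\geq B+v_{n+1}^2$ coming from $p^\top q\geq 0$. The case $B\leq A$ is immediate; for $B>A$ the constraint gives $B\leq A+c_\tau^2$ and $B-A\leq c_\tau^2$, so $B(B-A)\leq(A+c_\tau^2)c_\tau^2\leq c_\tau^2(2A+c_\tau^2)$. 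Establishing this clean inequality, rather than the looser triangle-style bounds that leak a factor of $\sqrt{3}$ or worse, is the main obstacle of the whole proof.

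Combining the technical inequality with the bound $\|p+q\|^2\leq 2\beta^2\Bar{\mu}^k/(1-\beta)$ already extracted inside the proof of Lemma \ref{lemma_x_next_positive} (see \eqref{eqn_d_x_d_s_inequality}) and with $p^\top q=d_{\Bar{x}}^\top d_{\Bar{s}}\geq 0$ from Lemma \ref{lemma_d_x_d_s}, one obtains
\[
\|\Bar{x}^+\pdot\Bar{s}^+-\Bar{\mu}^+e\|\leq\frac{\beta^2\Bar{\mu}^k}{\sqrt{2}\,(1-\beta)}\leq\beta\gamma\Bar{\mu}^k=\beta\Bar{\mu}^+,
\]
where the second inequality uses $\beta<\sqrt{2}-1$ (which gives $\beta/(\sqrt{2}(1-\beta))<\tfrac{1}{2}$) and $\gamma=1-\eta\geq 1-\beta>\tfrac{1}{2}$. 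The componentwise consequence $\Bar{x}^+\pdot\Bar{s}^+\geq(1-\beta)\Bar{\mu}^+\,e>0$, combined with $\Bar{x}^+>0$ from Lemma \ref{lemma_x_next_positive}, finally yields $\Bar{s}^+>0$.
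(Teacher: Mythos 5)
Your proof is correct, and it shares the paper's overall skeleton: both arguments decompose $\Bar{s}^+$ into the linearized update plus the second-order remainder $g=\psi(\Bar{x}^+)-\psi(\Bar{x}^k)-\nabla\psi(\Bar{x}^k)d_{\Bar{x}}$, both identify $\Bar{x}^+\pdot g$ as $-(d_{\Bar{x}}^\top d_{\Bar{s}})\,e_{n+1}$, and both close with the bound \eqref{eqn_d_x_d_s_inequality} on $\bigl\|\sqrt{\Bar{s}^k/\Bar{x}^k}\pdot d_{\Bar{x}}+\sqrt{\Bar{x}^k/\Bar{s}^k}\pdot d_{\Bar{s}}\bigr\|^2$. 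You deviate in two places, both defensibly. First, you derive $\Bar{x}^+\pdot g=-(d_{\Bar{x}}^\top d_{\Bar{s}})\,e_{n+1}$ abstractly: since the first $n$ rows of $\psi$ are linear, $g$ is supported on the last coordinate, so $\Bar{x}^+\pdot g=((\Bar{x}^+)^\top g)\,e_{n+1}$, and $(\Bar{x}^+)^\top g=-d_{\Bar{x}}^\top\nabla\psi(\Bar{x}^k)d_{\Bar{x}}=-d_{\Bar{x}}^\top d_{\Bar{s}}$ follows from Lemma \ref{lemma_equality} and Lemma \ref{lemma_d_x_d_s}; the paper instead grinds through the explicit block computation in $M$, $\tau^k$, $d_\tau$. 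Your route is shorter and reaches the same intermediate fact. Second, and more substantively, the paper bounds the deviation by the triangle inequality $\|d_{\Bar{x}}\pdot d_{\Bar{s}}\|+d_{\Bar{x}}^\top d_{\Bar{s}}\leq\|a\|\,\|b\|+a^\top b=\tfrac12\|a+b\|^2-\tfrac12(\|a\|-\|b\|)^2\leq\tfrac12\|a+b\|^2$, giving the constant $\beta^2\Bar{\mu}^k/(1-\beta)$ and requiring $\beta/((1-\beta)\gamma)\leq1$, which is verified via the near-tight estimate $\beta/(1-\beta)+\beta/\sqrt{2}<1$. You instead keep the exact closed form $\mathrm{col}(d_x\pdot d_s,\,-d_x^\top d_s)$ and prove the sharper inequality $\|p_1\pdot q_1\|^2+(p_1^\top q_1)^2\leq\tfrac18\|p+q\|^4$ under $p^\top q\geq0$. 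I verified your reduction (via $u=p+q$, $v=p-q$, using $\|a-b\|^2\leq\|a\|^2+\|b\|^2$ for componentwise nonnegative $a,b$ and $\|u_{1:n}\pdot u_{1:n}\|\leq\|u_{1:n}\|^2$) and the two-case argument for $B(B-A)\leq c_\tau^2(2A+c_\tau^2)$; both are sound, and the constraint $A+c_\tau^2\geq B+v_{n+1}^2$ is exactly $p^\top q\geq0$, which Lemma \ref{lemma_d_x_d_s} supplies. The payoff is a factor $1/\sqrt{2}$: your final condition $\beta/(\sqrt{2}(1-\beta))<\tfrac12<\gamma$ has slack and would in principle admit a slightly larger $\beta$ (hence a marginally smaller certified iteration count); the cost is the extra combinatorial lemma. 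The conclusion $\Bar{s}^+>0$ from $\min(\Bar{x}^+\pdot\Bar{s}^+)\geq(1-\beta)\Bar{\mu}^+>0$ and Lemma \ref{lemma_x_next_positive} is identical in both.
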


\begin{proof}
By choosing $\alpha=1$ and according to the update of $\Bar{s}$ \eqref{eqn_bar_s_update}, we have
\[
\begin{aligned}
&\quad\Bar{x}^+\pdot\Bar{s}^+-\Bar{\mu}^+e \\
&= \Bar{x}^+\pdot\left(\Bar{s}^k+d_{\Bar{s}}+\psi(\Bar{x}^+)-\psi(\Bar{x}^k)-\nabla\psi(\Bar{x}^k)d_{\Bar{x}}\right) - \Bar{\mu}^+e\\
&=  (\Bar{x}^k+d_{\Bar{x}})\pdot(\Bar{s}^k+d_{\Bar{s}}) - \Bar{\mu}^+e \\
&\quad + \Bar{x}^+\pdot\left(\psi(\Bar{x}^+)-\psi(\Bar{x}^k)-\nabla\psi(\Bar{x}^k)d_{\Bar{x}}\right)\\
&= d_{\Bar{x}}\pdot d_{\Bar{s}} + \Bar{x}^+\pdot\left(\psi(\Bar{x}^+)-\psi(\Bar{x}^k)-\nabla\psi(\Bar{x}^k)d_{\Bar{x}}\right).\\
&\quad\quad\text{(by \eqref{eqn_Newton_direction_2} and \eqref{eqn_mu_update})}
\end{aligned}
\]
According to the definition of $\psi$ and $\nabla \psi$ (see \eqref{eqn_psi} and \eqref{eqn_nabla_psi}, respectively), and the full Newton update of $\Bar{x}^+=\Bar{x}^k+d_{\Bar{x}}$, we have that
\[
\begin{aligned}
    &\quad\psi(\Bar{x}^+)-\psi(\Bar{x}^k)-\nabla\psi(\Bar{x}^k)d_{\Bar{x}}\\
    &=\left[\begin{array}{cc}
         Mx^+ + q\tau^+ \\
         \frac{-(x^+)^\top Mx^+}{\tau^+} -q^\top x^+
    \end{array}\right]-\left[\begin{array}{cc}
         Mx^k + q\tau^k \\
         \frac{-(x^k)^\top Mx^k}{\tau^k} -q^\top x^k
    \end{array}\right]\\
   &\quad -
    \left[\begin{array}{cc}
             M & q \\
             -\frac{(x^k)^\top (M+M^\top)}{\tau^k}-q^\top  & \frac{(x^k)^\top Mx^k}{(\tau^k)^2}
         \end{array}\right]\left[\begin{array}{c}
              d_x  \\
              d_\tau 
         \end{array}\right]\\
         &=\left[\begin{array}{cc}
         Md_x + qd_\tau \\
         \frac{(x^k)^\top Mx^k}{\tau^k}-\frac{(x^+)^\top Mx^+}{\tau^+} -q^\top d_x
    \end{array}\right] \\
    &\quad- \left[\begin{array}{c}
         Md_x + qd_\tau    \\
         -\frac{(x^k)^\top (M+M^\top)d_x}{\tau^k}-q^\top d_x + \frac{d_\tau(x^k)^\top Mx^k}{(\tau^k)^2}
    \end{array}\right]\\
&=\left[\begin{array}{c}
         0  \\
    \frac{(x^k)^\top Mx^k+(x^k)^\top (M+M^\top)d_x}{\tau^k} -\frac{(x^+)^\top Mx^+}{\tau^+} - \frac{d_\tau(x^k)^\top Mx^k}{(\tau^k)^2}
\end{array}\right]\\
&\quad\quad\text{(as $d_x^\top(M+M^\top)x=2d_x^\top M x$)}\\
&=\left[\begin{array}{c}
         0  \\
\frac{d_\tau(x^k)^\top Mx^k+2d_\tau d_x^\top Mx^k-\tau^k d_x^\top Md_x}{\tau^k(\tau^k+d_\tau)} - \frac{d_\tau(x^k)^\top Mx^k}{(\tau^k)^2}
    \end{array}\right]
\end{aligned}
\]
and, by multiplying $\Bar{x}^+$, we have that
\[
\begin{aligned}
 &\quad \Bar{x}^+\pdot\left(\psi(\Bar{x}^+)-\psi(\Bar{x}^k)-\nabla\psi(\Bar{x}^k)d_{\Bar{x}}\right)\\   
 &=\left[\begin{array}{c}
     x^k+d_x  \\
    \tau^k+d_\tau 
 \end{array}\right] \\
&\quad \pdot \left[\begin{array}{c}
         0  \\
         \frac{d_\tau(x^k)^\top Mx^k+2d_\tau d_x^\top Mx^k-\tau^k d_x^\top Md_x}{\tau^k(\tau^k+d_\tau)} - \frac{d_\tau(x^k)^\top Mx^k}{(\tau^k)^2}
\end{array}\right]\\
&=\left[\begin{array}{c}
         0  \\
\frac{d_\tau(x^k)^\top Mx^k+2d_\tau d_x^\top Mx^k-\tau^k d_x^\top Md_x}{\tau^k}
\end{array}\right]\\
    &\quad-\left[\begin{array}{c}
         0  \\
\frac{d_\tau(\tau^k+d_\tau)(x^k)^\top Mx^k}{(\tau^k)^2}
\end{array}\right]\\
&=\left[\begin{array}{c}
         0  \\
\frac{\tau^kd_\tau(x^k)^\top Mx^k+2\tau^kd_\tau d_x^\top Mx^k-(\tau^k)^2 d_x^\top Md_x}{(\tau^k)^2}
\end{array}\right]\\
    &\quad-\left[\begin{array}{c}
         0  \\
\frac{(\tau^kd_\tau+d_\tau^2)(x^k)^\top Mx^k}{(\tau^k)^2}
\end{array}\right]\\
&=\left[\begin{array}{c}
         0  \\
\frac{2\tau^kd_\tau d_x^\top Mx^k-(\tau^k)^2 d_x^\top Md_x-d_\tau^2(x^k)^\top Mx^k}{(\tau^k)^2}
\end{array}\right]\\
&=\left[\begin{array}{c}
    0  \\
         -\left(d_x^\top Md_x-2\frac{d_\tau}{\tau^k}d_x^\top Mx^k+\frac{d_\tau^2}{(\tau^k)^2}(x^k)^\top Mx^k\right)
    \end{array}\right] \\
    &= \left[\begin{array}{c}
         0  \\
         -\left(d_x-\frac{d_\tau}{\tau^k}x^k\right)^\top M\left(d_x-\frac{d_\tau}{\tau^k}x^k\right)
    \end{array}\right].
\end{aligned}
\]
By the semi-positive definiteness of $M$ and Lemmas \ref{lemma_psd} and \ref{lemma_d_x_d_s}, we have that 
\[
\begin{aligned}
&\left\|\Bar{x}^+\pdot\left(\psi(\Bar{x}^+)-\psi(\Bar{x}^k)-\nabla\psi(\Bar{x}^k)d_{\Bar{x}}\right)\right\|\\
&= \left(d_x-\frac{d_\tau}{\tau^k}x^k\right)^{\!\!\top} M\left(d_x-\frac{d_\tau}{\tau^k}x^k\right)\\
&=d_{\Bar{x}}^\top \nabla\psi(\Bar{x}^k)d_{\Bar{x}}=d_{\Bar{x}}^\top d_{\Bar{s}}.
\end{aligned}    
\]

Then, by summarizing the above analysis, we have that
\[
\begin{aligned}
&\|\Bar{x}^+\pdot\Bar{s}^+-\Bar{\mu}^+e\|\\
&\leq\|d_{\Bar{x}}\pdot d_{\Bar{s}}|\| +\left\|\Bar{x}^+\pdot\left(\psi(\Bar{x}^+)-\psi(\Bar{x}^k)-\nabla\psi(\Bar{x}^k)d_{\Bar{x}}\right)\right\|\\
&\leq\|d_{\Bar{x}}\pdot d_{\Bar{s}}\| + d_{\Bar{x}}^Td_{\Bar{s}}=\left\|\left(\sqrt{\frac{\Bar{s}^k}{\Bar{x}^k}}\pdot d_{\Bar{x}}\right)\pdot\left(\sqrt{\frac{\Bar{x}^k}{\Bar{s}^k}}\pdot d_{\Bar{s}}\right)\right\| \\
&\quad+ \left(\sqrt{\frac{\Bar{s}^k}{\Bar{x}^k}}\pdot d_{\Bar{x}}\right)^{\!\!\top} \sqrt{\frac{\Bar{x}^k}{\Bar{s}^k}}\pdot d_{\Bar{s}}.
\end{aligned}
\]
For simplicity, we introduce two vectors $a\triangleq\sqrt{\frac{\Bar{s}^k}{\Bar{x}^k}}\pdot d_{\Bar{x}}$ and $b\triangleq\sqrt{\frac{\Bar{x}^k}{\Bar{s}^k}}\pdot d_{\Bar{s}}$; then we have that
\[
\begin{aligned}
&\quad\|\Bar{x}^+\pdot \Bar{s}^+-\Bar{\mu}^+e\|\leq\left\|a\pdot b\right\| + a^\top b\leq \left\|a\right\|\cdot \left\|b\right\| + a^\top b\\
&= \left\|a\right\|\cdot \left\|b\right\| + \tfrac{1}{2}\!\left\|a+b\right\|^2-\tfrac{1}{2}\big(\left\|a\right\|^2+\left\|b\right\|^2 \big)\\
&=\tfrac{1}{2}\!\left\|a+b\right\|^2 - \tfrac{1}{2}\!\left(\|a\|-\|b\| \right)^2\\
&\leq\tfrac{1}{2}\|a+b\|^2 =\tfrac{1}{2}\!\left\|\sqrt{\frac{\Bar{s}^k}{\Bar{x}^k}}\pdot d_{\Bar{x}}+\sqrt{\frac{\Bar{x}^k}{\Bar{s}^k}}\pdot d_{\Bar{s}} \right\|^2\\
&\leq\frac{\beta^2\Bar{\mu}^k}{1-\beta}=\frac{\beta}{(1-\beta)\gamma}(\beta\Bar{\mu}^+).\\
&\quad\quad\text{by \eqref{eqn_d_x_d_s_inequality} and \eqref{eqn_mu_update}}
\end{aligned}
\]
Therefore, to prove $\|\Bar{x}^+\Bar{s}^+-\Bar{\mu}^+e\|\leq\beta\Bar{\mu}^+$, we only need to prove the  inequality
\[
\frac{\beta}{(1-\beta)\gamma}\leq1,
\]
which is equivalent to proving that
\[
\frac{\beta}{1-\beta}\leq1-\frac{\beta}{\sqrt{n+1}}, \quad \forall n\geq1,
\]
as $\gamma=1-\eta=1-\frac{\beta}{\sqrt{n+1}}$. When $0<\beta<\sqrt{2}-1$, $\forall n\geq1$, the inequalities
\[
\frac{\beta}{1-\beta}+\frac{\beta}{\sqrt{n+1}}\leq\frac{\beta}{1-\beta}+\frac{\beta}{\sqrt{2}}<\frac{\sqrt{2}-1}{2-\sqrt{2}}+\frac{\sqrt{2}-1}{\sqrt{2}}=1
\]
hold, which completes the first part of the proof. 

Now we have that $\|\Bar{x}^+\pdot \Bar{s}^+-\Bar{\mu}^+e\|\leq\beta\Bar{\mu}^+$; thus
\[
\|\Bar{x}^+\pdot\Bar{s}^+-\Bar{\mu}^+e\|_\infty\leq\|\Bar{x}^+\pdot\Bar{s}^+-\Bar{\mu}^+e\|\leq\beta\Bar{\mu}^+.
\]
That is,
\[
-\beta\Bar{\mu}^+e\leq\Bar{x}^+\pdot\Bar{s}^+-\Bar{\mu}^+e\leq\beta\Bar{\mu}^+e,
\]
which clearly shows that $\min(\Bar{x}^+\pdot\Bar{s}^+)\geq(1-\beta)\Bar{\mu}^+>0$. 

By Lemma \ref{lemma_x_next_positive} we have $\Bar{x}^+>0$, so we have $\Bar{s}^+>0$, which completes the second part of the proof.
\end{proof}

Summarizing Lemmas \ref{lemma_x_next_positive} and \ref{lemma_s_next_positive}, if the current iterate  $(\Bar{x}^k,\Bar{s}^k)$ satisfies $\|\Bar{x}^k\Bar{s}^k-\Bar{\mu}^ke\|\leq\beta\Bar{\mu}^k$ (i.e., is located in a neighborhood of the central path), letting $\beta=0.414213$, $\eta=\frac{1}{\sqrt{n+1}}$, $\gamma=1-\eta$, $\alpha=1$, and adopting the update for $\Bar{x}$ \eqref{eqn_bar_x_update} and $\Bar{s}$ \eqref{eqn_bar_s_update}. The next iterate $(\Bar{x}^+,\Bar{s}^+)$ will keep positive and stay in the neighborhood of the central path. Clearly, the initialization strategy (\ref{eqn_initialization}) ensures that 
\begin{equation*}
\|\Bar{x}^0\pdot\Bar{s}^0-\Bar{\mu}^0\|=0\leq\beta\Bar{\mu}^0, \quad\Bar{x}^0>0, \quad\Bar{s}^0>0,
\end{equation*}
so the iterates, $(\Bar{x}^k,\Bar{s}^k)$ $\forall k=1,2,\cdots$ generated by Algorithm \ref{alg_fullNewton} satisfy
\[
    \begin{aligned}
        &\|\Bar{x}^k\pdot\Bar{s}^k-\Bar{\mu}^k\|\leq\beta\Bar{\mu}^k, \\
        &\Bar{x}^k>0,\quad \Bar{s}^k>0.
    \end{aligned}
    \]
Moreover, by Lemma \ref{lemma_x_s_update}, the infeasibility residual $\Bar{r}$ and the complementarity gap $\Bar{x}^\top \Bar{s}$ are both reduced at the same rate $(1-\eta)=1-\frac{0.414213}{\sqrt{n+1}}$. Therefore, by Lemma \ref{lemma_HLCP_central_path}, the iterates generated by Algorithm \ref{alg_fullNewton} will converge to a maximal complementary solution for the HLCP.

The number of iterations $\mathcal{N}$ required by Algorithm \ref{alg_fullNewton} will depend on the initial infeasibility residual $\Bar{r}^0$ and the complementarity gap $(\Bar{x}^0)^\top \Bar{s}^0$, which is discussed in the next Subsection.

\subsection{Scaling and initialization strategy}\label{sec_initialization_stragegy}
To develop a QP algorithm with \textit{data-independent} (\textit{only dimension-dependent}) iteration complexity, we first introduce a scaling strategy for the matrix $M$ and the vector $q$ in the LCP as 
\begin{equation}\label{eqn_scale}
    \Bar{M}\leftarrow \frac{1}{\sigma}M, \quad\Bar{q} \leftarrow\frac{1}{\sigma} q,
\end{equation}
where 
\[
    \sigma = \max(1,Me+q, -e^\top M e - e^\top q).
\]
This scaling strategy \eqref{eqn_scale} does not affect the optimal value of $x$ in (LCP). 

After that, we adopt the initialization strategy
\begin{equation}\label{eqn_initialization}
x^0=e,\quad\tau^0=1,\quad s^0=e, \quad \kappa^0=1.
\end{equation}
\begin{lemma}\label{lemma_initialization}
    The scaling strategy (\ref{eqn_scale}) and the initialization strategy (\ref{eqn_initialization}) ensure that the initial complementarity gap 
    \[
(\Bar{x}^0)^\top\Bar{s}^0=(x^0)^\top s^0+(\tau^0)^\top\kappa^0=n+1
    \]
    and the initial infeasibility residual
    \[
    \|\Bar{r}^0\|\leq n+1.
    \]
\end{lemma}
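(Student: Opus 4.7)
The plan is to dispose of the two claims separately: the first is a pure substitution, while the second hinges on using Lemma \ref{lemma_equality} together with the scaling rule to conclude that $\bar{r}^0$ is componentwise non-negative, after which the norm bound falls out from $\|\cdot\|_2\le\|\cdot\|_1$.

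For the complementarity gap, I simply substitute $\bar{x}^0=\mathrm{col}(e,1)$ and $\bar{s}^0=\mathrm{col}(e,1)$ (with $e\in\mathbb{R}^n$) into the inner product, giving $(\bar{x}^0)^\top\bar{s}^0 = e^\top e + 1\cdot 1 = n+1$. No further work is required here.

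For the residual, I first write out $\bar{r}^0 = \bar{s}^0 - \psi(\bar{x}^0)$ block by block using \eqref{eqn_psi} with $(x^0,\tau^0)=(e,1)$ and the \emph{scaled} data $(M/\sigma,\,q/\sigma)$; this produces $r^0 = e - (Me+q)/\sigma$ and $\tilde r^0 = 1 + (e^\top M e + e^\top q)/\sigma$. Next I unpack the definition $\sigma=\max(1,\,Me+q,\,-e^\top Me-e^\top q)$ componentwise: it yields $\sigma\ge (Me+q)_i$ for each $i$ and $\sigma\ge -(e^\top Me+e^\top q)$. Substituting back gives $r^0_i\ge 0$ and $\tilde r^0\ge 0$, i.e.\ $\bar{r}^0\ge 0$ componentwise. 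This non-negativity is the key structural fact that makes the norm bound tight.

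The punchline is then short. By Lemma \ref{lemma_equality}, $(\bar{x}^0)^\top\psi(\bar{x}^0)=0$, so
\[
(\bar{x}^0)^\top \bar{r}^0 \;=\; (\bar{x}^0)^\top \bar{s}^0 \;=\; n+1.
\]
Because $\bar{x}^0=e\in\mathbb{R}^{n+1}$ is the all-ones vector and $\bar{r}^0\ge 0$, the left-hand side equals $\|\bar{r}^0\|_1$. Hence $\|\bar{r}^0\|_1 = n+1$, and the standard inequality $\|\bar{r}^0\|\le \|\bar{r}^0\|_1$ delivers $\|\bar{r}^0\|\le n+1$.

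The only place one must be careful is the interpretation of the $\max$ in the definition of $\sigma$ and its use to establish componentwise non-negativity of $\bar{r}^0$; once that is set, the argument is a two-line application of Lemma \ref{lemma_equality} and a norm inequality. I do not foresee any technical obstacle beyond writing out the two block components of $\bar{r}^0$ cleanly.
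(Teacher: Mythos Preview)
Your proposal is correct and follows essentially the same route as the paper: both establish $\bar{r}^0\ge 0$ componentwise from the definition of $\sigma$, then compute $e^\top\bar{r}^0=n+1$ and invoke $\|\bar{r}^0\|\le\|\bar{r}^0\|_1$. The only cosmetic difference is that you obtain $e^\top\bar{r}^0=n+1$ via Lemma~\ref{lemma_equality}, whereas the paper just writes out $\bar{r}^0$ explicitly and observes that the $\bar{M}e+\bar{q}$ terms cancel when the components are summed.
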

\begin{proof}
    Clearly, the initialization strategy (\ref{eqn_initialization}) results in the initial complementarity gap 
    \[
(\Bar{x}^0)^\top\Bar{s}^0=(x^0)^\top s^0+(\tau^0)^\top\kappa^0=n+1.
    \]
    
    After adopting the scaling strategy (\ref{eqn_scale}), the initial infeasibility residual
    \[
    \Bar{r}^0 = \left[\begin{array}{c}
         e - \Bar{M}e-\Bar{q}  \\
         1 + e^\top \Bar{M}e + e^\top \Bar{q}
    \end{array}\right]
    \]
    and the scaling factor $\sigma$ ensure that all elements of $\Bar{r}^0$ are nonnegative 
    (i.e.,  $\Bar{r}^0\geq0$). Thus, we have that
    \[
\|\Bar{r}^0\|^2\leq (e^\top \Bar{r}^0)^2 = (n+1)^2
    \]
    that is, $ \|\Bar{r}^0\|\leq n+1$. This completes the proof.
\end{proof}

Thus, our proposed scaling and initialization strategies make the initial complementarity gap $ (\Bar{x}^0)^\top\Bar{s}^0$ and initial infeasibility residual $\Bar{r}^0$ \textit{independent} from the problem data $(M,q)$ (\textit{only dependent on the problem dimension $n$}).

\subsection{Exact iteration complexity}\label{sec_exact_iteration_complexity}
Thanks to the above scaling and initialization strategies, our proposed infeasible full Newton IPM Algorithm \ref{alg_fullNewton} is \textit{only dimension dependent} (\textit{data independent}), is \textit{simple to calculate}, and has \textit{exact} iteration complexity.
\begin{theorem}\label{theorem_number_of_iterations}
The infeasible full Newton IPM Algorithm \ref{alg_fullNewton} returns a $\epsilon$-approximate maximal complementary solution of HLCP (\ref{eqn_HLCP}) in an exact 
\[
\mathcal{N}=\left\lceil\frac{\log(\frac{n+1}{\epsilon})}{-\log(1-\frac{0.414213}{\sqrt{n+1}})}\right\rceil
\]
iterations, where an $\epsilon$-approximate maximal complementary solution satisfies $\|\Bar{r}\|\leq\epsilon,~\Bar{x}^\top\Bar{s}\leq\epsilon$.
\end{theorem}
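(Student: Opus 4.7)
The plan is to combine the per-iteration contraction rate (established in Lemma on the update of $(\bar{x}^+)^\top \bar{s}^+$ and the Remark on the residual update) with the initialization bounds (established in Lemma on the scaling and initialization strategy), and then solve a simple scalar inequality for the smallest integer $k$ that drives both $\bar{x}^\top \bar{s}$ and $\|\bar{r}\|$ below $\epsilon$.

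First, I would invoke the convergence-preserving properties already proved: under the choices $\beta=0.414213$, $\eta=\beta/\sqrt{n+1}$, $\gamma=1-\eta$, and $\alpha=1$, every iterate stays positive and inside the prescribed neighborhood of the central path (Lemmas on $\bar{x}^+>0$ and $\bar{s}^+>0$). Crucially, with $\eta=1-\gamma$, the earlier Lemma gives $(\bar{x}^+)^\top \bar{s}^+ = (1-\eta)(\bar{x}^k)^\top \bar{s}^k$, and the Remark on the residual gives $\bar{r}^+ = (1-\eta)\bar{r}^k$. Iterating, after $k$ steps one has
\[
(\bar{x}^k)^\top \bar{s}^k = (1-\eta)^k (\bar{x}^0)^\top \bar{s}^0, \qquad \|\bar{r}^k\| = (1-\eta)^k \|\bar{r}^0\|.
\]

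Next, I would apply the initialization lemma, which, thanks to the scaling \eqref{eqn_scale} and the choice \eqref{eqn_initialization}, guarantees $(\bar{x}^0)^\top \bar{s}^0 = n+1$ and $\|\bar{r}^0\| \leq n+1$. Substituting these bounds yields the simultaneous estimate
\[
\max\!\big((\bar{x}^k)^\top \bar{s}^k,\ \|\bar{r}^k\|\big) \leq (1-\eta)^k (n+1).
\]
Requiring the right-hand side to be at most $\epsilon$ is equivalent to $k\,\log(1-\eta) \leq \log\!\big(\epsilon/(n+1)\big)$, and since $\log(1-\eta)<0$ this rearranges to
\[
k \geq \frac{\log\!\big((n+1)/\epsilon\big)}{-\log(1-\eta)} = \frac{\log\!\big((n+1)/\epsilon\big)}{-\log\!\big(1-\tfrac{0.414213}{\sqrt{n+1}}\big)}.
\]
Taking the ceiling gives the smallest such integer, namely $\mathcal{N}$, and shows the algorithm terminates with $\|\bar{r}\|\leq \epsilon$ and $\bar{x}^\top \bar{s}\leq \epsilon$ after exactly $\mathcal{N}$ iterations.

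There is no real obstacle here: the theorem is essentially a corollary of the two established facts that (i) both the complementarity gap and the infeasibility residual contract by the same factor $1-\eta$ per full-Newton step, and (ii) the scaling/initialization strategy makes both initial quantities bounded by $n+1$ independently of the data $(M,q)$. The only subtle point worth emphasizing in the write-up is that the bound is \emph{exact} (not merely an upper bound) in the sense that $\mathcal{N}$ depends only on $n$ and $\epsilon$ and not on the problem data, which is precisely the data-independence claim that justifies the execution-time certificate advertised in the paper.
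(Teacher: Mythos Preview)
Your proposal is correct and follows essentially the same route as the paper: invoke the per-iteration contraction of both $\bar{r}$ and $\bar{x}^\top\bar{s}$ by the factor $1-\eta$, plug in the data-independent initial bounds $(\bar{x}^0)^\top\bar{s}^0=n+1$ and $\|\bar{r}^0\|\leq n+1$ from the scaling/initialization lemma, and solve the resulting scalar inequality for the ceiling. One small sharpening worth making in your write-up: the paper's justification for calling $\mathcal{N}$ \emph{exact} (rather than an upper bound) rests on the fact that $(\bar{x}^k)^\top\bar{s}^k=(1-\eta)^k(n+1)$ holds as an \emph{equality}, so fewer than $\mathcal{N}$ iterations cannot drive the complementarity gap below $\epsilon$; your closing sentence conflates this with data-independence, which is a related but distinct property.
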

\begin{proof}
    Since our previous summarization of Lemmas \ref{lemma_x_next_positive} and \ref{lemma_s_next_positive} (see the last paragraph of the Subsection \ref{sec_convergence_analysis}), shows the convergence of Algorithm \ref{alg_fullNewton}, here we only show how to derive the required number of iterations for satisfying  $\|\Bar{r}^k\|\leq\epsilon$ and $(\Bar{x}^k)^\top\Bar{s}^k\leq\epsilon$.
    
     By Lemma \ref{lemma_x_s_update}, the infeasibility residual $\Bar{r}^k$ and complementarity gap $(\Bar{x}^k)^\top \Bar{s}^k$ at the $k$th iterate are \begin{equation}\label{eqn_residual_duality_gap_desreasing}
    \begin{aligned}
        \Bar{r}^k&=(1-\eta)^k\Bar{r}^0,\\
    (\Bar{x}^k)^\top \Bar{s}^k &= (1-\eta)^k(\Bar{x}^0)^\top \Bar{s}^0.
    \end{aligned}   
    \end{equation}
    Then, the inequalities $\|\Bar{r}^k\|\leq\epsilon$ and $(\Bar{x}^k)^\top\Bar{s}^k\leq\epsilon$ hold if
    \[
    \begin{aligned}
        (1-\eta)^k\|\Bar{r}^0\|&\leq\epsilon,\\
        (1-\eta)^k(\Bar{x}^0)^\top \Bar{s}^0&\leq\epsilon.
    \end{aligned}
    \]
    Taking logarithms gives
    \[
    \begin{aligned}
    &k\log(1-\eta) + \log(\|\Bar{r}^0\|)\leq\log(\epsilon),\\
    &k\log(1-\eta) + \log((\Bar{x}^0)^\top \Bar{s}^0)\leq\log(\epsilon).
    \end{aligned}
    \]
    Thus, the inequalities $\|\Bar{r}^k\|\leq\epsilon$ and $(\Bar{x}^k)^\top\Bar{s}^k\leq\epsilon$ hold if
    \[
    \begin{aligned}
k&\geq\left\lceil\frac{\log(\frac{\max(\|\Bar{r}^0\|,(\Bar{x}^0)^\top \Bar{s}^0)}{\epsilon})}{-\log(1-\eta)}\right\rceil=\left\lceil\frac{\log(\frac{n+1}{\epsilon})}{-\log(1-\eta)}\right\rceil.\\
&\quad\quad\text{(by Lemma \ref{lemma_initialization}, $\max(\|\Bar{r}^0\|,(\Bar{x}^0)^\top \Bar{s}^0)=n+1$)}
    \end{aligned}
    \]
The expressions \eqref{eqn_residual_duality_gap_desreasing} hold as equalities rather than inequalities, thus the number of iterations is \textit{exact}, rather than an upper bound. Therefore, after the exact number of iterations
\begin{equation}\label{eqn_number_of_iterations}
\mathcal{N}=\left\lceil\frac{\log(\frac{n+1}{\epsilon})}{-\log(1-\frac{0.414213}{\sqrt{n+1}})}\right\rceil,
\end{equation}
    the inequalities $\|\Bar{r}^k\|\leq\epsilon$ and $(\Bar{x}^k)^\top\Bar{s}^k\leq\epsilon$ hold, which completes the proof.
\end{proof}

\section{Algorithm implementation}\label{sec_algorithm_implementation}
\subsection{Robust procedure for solving Newton systems}
In  Algorithm \ref{alg_fullNewton}, the most computationally expensive step is Step 3, solving the linear system \eqref{eqn_Newton_direction}. To meet our mentioned execution time certificate requirement, we cannot adopt iterative methods to solve the linear system \eqref{eqn_Newton_direction} as their [flops] are hard to determine and \textit{data dependent}; here we have to use direct methods (such as LU, QR, and Cholesky decomposition) as their [flops] are determined and \textit{data independent}.

In fact, Algorithm \ref{alg_fullNewton} does not use $d_{\bar{s}}$ (as $\bar{s}\leftarrow\psi(\bar{x})+\gamma\bar{r}$, rather than $\bar{s}\leftarrow\bar{s}+d_{\bar{s}}$). By eliminating $d_{\Bar{s}}$ with
\[
d_{\Bar{s}} =\frac{\Bar{s}^k}{\Bar{x}^k}d_{\Bar{x}^k} + \gamma\Bar{\mu}^k\frac{1}{\Bar{x}^k}-\Bar{s}^k,
\] we can reduce the linear system \eqref{eqn_Newton_direction} to the compact system
\begin{equation}\label{eqn_campcat_Newton_system}
    \left(\nabla \psi(\Bar{x}^k) + \diag\!\left(\frac{\Bar{s}^k}{\Bar{x}^k}\right)\right)d_{\Bar{x}} = \gamma\Bar{\mu}^k\frac{1}{\Bar{x}^k} - \Bar{s}^k + \eta\Bar{r}^k.
\end{equation}
\begin{remark}\label{remark_psd}
    By Lemma \ref{lemma_psd}, $\nabla \psi(\Bar{x}^k)$ is semi-positive definite. Thus, for arbitrary $\bar{x}^k,\bar{s}^k\in\mathbb{R}^{n+1}_{++}$, \eqref{eqn_campcat_Newton_system} is a positive definite system and thus there exists a unique solution.
\end{remark}

After adopting the scaling \eqref{eqn_scale} and according to the definition $\nabla\psi$ \eqref{eqn_nabla_psi}, $M$, and $\bar{x}=\mathrm{col}(z,y,\tau),\bar{s}=\mathrm{col}(v,w,\kappa)$, the linear system \eqref{eqn_campcat_Newton_system} is expanded as
\begin{equation}\label{eqn_expanded_Newton}
    \begin{aligned}
&\left[\begin{array}{ccc}
Q(t)+\diag(\frac{v^k}{z^k}) & -A(t)^\top & c(t)\\
            A(t) & \diag(\frac{w^k}{y^k}) & -b(t)\\
            -\frac{2(z^k)^\top Q(t)}{\tau^k}-c(t)^\top & b(t)^\top & \frac{(z^k)^\top Q(t)z^k}{(\tau^k)^2}+\frac{\kappa^k}{\tau^k} 
\end{array}\right]\\
        &\quad\cdot \left[\begin{array}{c}
                 d_z  \\
                 d_y\\
                 d_\tau
\end{array}\right]=\left[\begin{array}{c}
\gamma\bar{\mu}\frac{1}{z^k}-v^k +\eta r_z  \\
\gamma\bar{\mu}\frac{1}{y^k}-w^k +\eta r_y\\
\gamma\bar{\mu}\frac{1}{\tau^k}-\kappa^k +\eta r_\tau 
\end{array}\right]
    \end{aligned}
\end{equation}
where $r_z\triangleq v^k-Q(t)z^k+A(t)^\top y^k-c(t)\tau^k$, $r_y\triangleq w^k-A(t)z^k+b(t)\tau^k$, $r_\tau\triangleq \kappa^k+(z^k)^\top Q(t) z^k+c(t)^\top z^k-b(t)^\top y^k$, and $\bar{\mu}\triangleq \frac{(z^k)^\top v^k+y^k)^\top w^k+\tau^k\kappa^k}{n+1}$. 

The linear system \eqref{eqn_expanded_Newton} is not a symmetric positive definite system, and we tried to reduce it to a smaller symmetric positive definite system to allow the use of an efficient Cholesky decomposition method, but this approach is not robust and cannot provide a high-accuracy solution when close to the end of the iterations. This occurs because, when close to the end of the iterations, the linear system \eqref{eqn_expanded_Newton} is too ill-conditioned, which is the cost of the use of the homogeneous formulation. For solving \eqref{eqn_expanded_Newton}, we choose to call Lapack's function \textit{dgesv} \cite{laug}, which uses the LU decomposition with partial pivoting and row interchanges. 

Future work will investigate an effective preconditioner to allow the use of Cholesky decomposition (which has half of the cost of LU factorization) on the smaller compact system. Currently, we observe that the diagonal preconditioner reduces the condition number of \eqref{eqn_expanded_Newton}.

\subsection{Infeasibility reporting criteria}
 As for the iterates $\{x^k,\tau^k,s^k,\kappa^k\}$ generated by Algorithm \ref{alg_fullNewton}, Lemma \ref{lemma_HLCP_solution} indicates that $\tau^k$ will be bounded away from zero and $\kappa^k$ will quadratically converge to zero if the QP (\ref{eqn_QP}) is feasible, or $\kappa^k$ will be bounded away from zero and $\tau^k$ will quadratically converge to zero if the QP (\ref{eqn_QP}) is infeasible. Thus, this paper adopts $\tau<\kappa$ as the infeasibility reporting criteria.

\subsection{Execution time certificate}\label{sec_alg_2}
Summarizing the above analysis, we rewrite Algorithm \ref{alg_fullNewton} as an implementable Algorithm \ref{alg_efficient}. Note that Step 1 of Algorithm \ref{alg_fullNewton} is equal to Step 5 (before the iteration) and Step 6.6 of Algorithm \ref{alg_efficient}. In addition to using the Lapack library, we also use the efficient BLAS library \cite{lawson1979basic} to implement matrix-vector multiplications and vector inner product operations of  Algorithm \ref{alg_efficient}. 

Thanks to the \textit{exact} iteration complexity and the use of LU decomposition in solving the linear system \eqref{eqn_expanded_Newton}, the total [flops] of Algorithm \ref{alg_efficient} is easily countable and fixed. In practice, running Algorithm \ref{alg_efficient} once on a given processor, the execution time is obtained and will almost surely be the same among all QPs with the same problem dimension.

\floatname{algorithm}{Algorithm}
\begin{algorithm}
    \caption{EIQP: execution-time-certified and infeasibility-detecting QP solver,
    an infeasible homogeneous full Newton IPM algorithm for convex QP \eqref{eqn_QP}.
    }\label{alg_efficient}
    \textbf{Input}: given the data of QP (\ref{eqn_QP}) $(Q(t),c(t),A(t),b(t))$, $\beta=0.414213, \eta=\frac{\beta}{\sqrt{n+1}},\gamma=1-\eta$, and a optimality level $\epsilon$, and then the required exact number of iterations is $\mathcal{N}=\left\lceil\frac{\log(\frac{n+1}{\epsilon})}{-\log(1-\frac{0.414213}{\sqrt{n+1}})}\right\rceil$
    \vspace*{.1cm}\hrule\vspace*{.1cm}
    \begin{enumerate}[label*=\arabic*., ref=\theenumi{}]
        \item $z\leftarrow e,~y \leftarrow e,~\tau\leftarrow 1,~v\leftarrow e,~w\leftarrow e,~\kappa\leftarrow 1$;\\
        \item cache: $Qz\leftarrow Q(t)z, Ay\leftarrow A(t)^\top y, Az\leftarrow A(t)z, zc\leftarrow z^\top c(t), yb\leftarrow y^\top b(t),zQz\leftarrow z^\top Qz$;\\
        \item $\sigma\leftarrow\max(1,Qz-Ay+c(t),Az-b(t),-zQz-zc+yb)$;\\
        \item scale: $Q(t)\leftarrow \frac{1}{\sigma}Q(t), c(t)\leftarrow\frac{1}{\sigma}c(t),A(t)\leftarrow\frac{1}{\sigma}A(t),b(t)\leftarrow\frac{1}{\sigma}b(t),Qz\leftarrow\frac{1}{\sigma}Qz,Ay\leftarrow\frac{1}{\sigma}Ay,Az\leftarrow\frac{1}{\sigma}Az,zc\leftarrow\frac{1}{\sigma}zc,yb\leftarrow\frac{1}{\sigma}yb,zQz\leftarrow\frac{1}{\sigma}zQz$;\\
        \item $r_z\leftarrow v-Qz+Ay-c(t)\tau$, $r_y\leftarrow w-Az+b(t)\tau$, $r_\tau\leftarrow \kappa + \frac{zQz}{\tau} + zc - yb$;\\
        \item \textbf{for} $k=1,\cdots{}, \mathcal{N}$ \textbf{do}
        \begin{enumerate}[label*=\arabic*., ref=\theenumi{}]
            \item $\Bar{\mu}\leftarrow\frac{z^\top v + y^\top w +\tau\kappa}{n+1}$;\\
            \item call \textit{dgesv} to solve \eqref{eqn_expanded_Newton}, obtain $d_z, d_y, d_\tau$;\\
            \item $z\leftarrow z+d_{z},~y\leftarrow y+d_y,~\tau\leftarrow\tau+d_\tau$;\\
            \item cache: $Qz\leftarrow Q(t)z, Ay\leftarrow A(t)^\top y, Az\leftarrow A(t)z, zc\leftarrow z^\top c(t), yb\leftarrow y^\top b(t),zQz\leftarrow z^\top Qz$;\\
            \item $v\leftarrow Qz-Ay+c(t)\tau+\gamma r_z$, $w\leftarrow Az-b(t)\tau+\gamma r_y, ~\kappa\leftarrow -\frac{zQz}{\tau}-zc+yb+\gamma \Tilde{r}$;\\
            \item $r_z\leftarrow\gamma r_z, ~r_y\leftarrow\gamma r_y, r_\tau\leftarrow \gamma r_\tau$;
        \end{enumerate}
        \item[~] \textbf{end}\\
        \item \textbf{if} $\tau<\kappa$, \textbf{return}~ (\ref{eqn_QP}) is infeasible;~\textbf{otherwise}, \\
        \item \textbf{return } $z= \frac{1}{\tau}z$.
    \end{enumerate}
\end{algorithm}

\section{Numerical Experiments}
This section first compares the infeasibility-detection capabilities of Algorithm \ref{alg_efficient} with the known MATLAB's \textbf{quadprog} solver and popular \textbf{OSQP} solver. Then Algorithm \ref{alg_efficient} is applied to the AFTI-16 MPC example and ACC CBF-QP example.  The reported simulation results were obtained on a Mac mini with an Apple M4 Chip (10-core CPU and 16 GB RAM). 

The C-code implementation of Algorithm \ref{alg_efficient} is just one single C file, making it easy to integrate with other software and appealing in deploying in embedded production systems. Its MATLAB, Julia, and Python interfaces and numerical examples are publicly available at \url{https://github.com/liangwu2019/EIQP}.

\subsection{Random infeasible QP examples}
To demonstrate the infeasibility-detection capability of our proposed Algorithm \ref{alg_efficient}, we randomly generate 100 QPs ($\min \frac{1}{2}z^\top Qz+c^\top z, \text{ s.t. } Az\leq b$) for each case: condition number $cond(Q)$ varies from $10^1$ to $10^6$. To make QPs infeasible, we then add the contradictory constraints: $A\leftarrow [A;-A(1,:);-A(2,:)],b\leftarrow [b;-b(1)-1;-b(2)-1]$. As for solver settings, ``quadprog" uses its default setting, ``OSQP-default" uses its default setting, ``OSQP-$5e5$" changes the maximum number of iterations to $5e5$, and other settings as ('eps\_abs',1e-10,'eps\_rel',1e-10,'polish',true), and our ``EIQP" chooses $\epsilon=1e$-6. The infeasibility detection rate of these QP solvers is plotted in Fig.\  \ref{Fig1}. 

The quadprog solver has poor infeasibility-detection results and its detection rate gradually becomes lower as the condition number of $Q$ increases. The OSQP solver with the default setting cannot detect 100\% of QP infeasibility; increasing the maximum number of iterations to $5e5$ and decreasing the optimality level can increase its detecting rate but still cannot reach 100\%. It is hard to tune the OSQP algorithm parameters to achieve 100\% detection rate which is also computationally expensive. Our EIQP solver has a 100\% infeasibility detection rate among all cases, consistent with the proof.
\begin{figure}[!ht]
        \hspace*{-0em}\includegraphics[width=1.0\columnwidth]{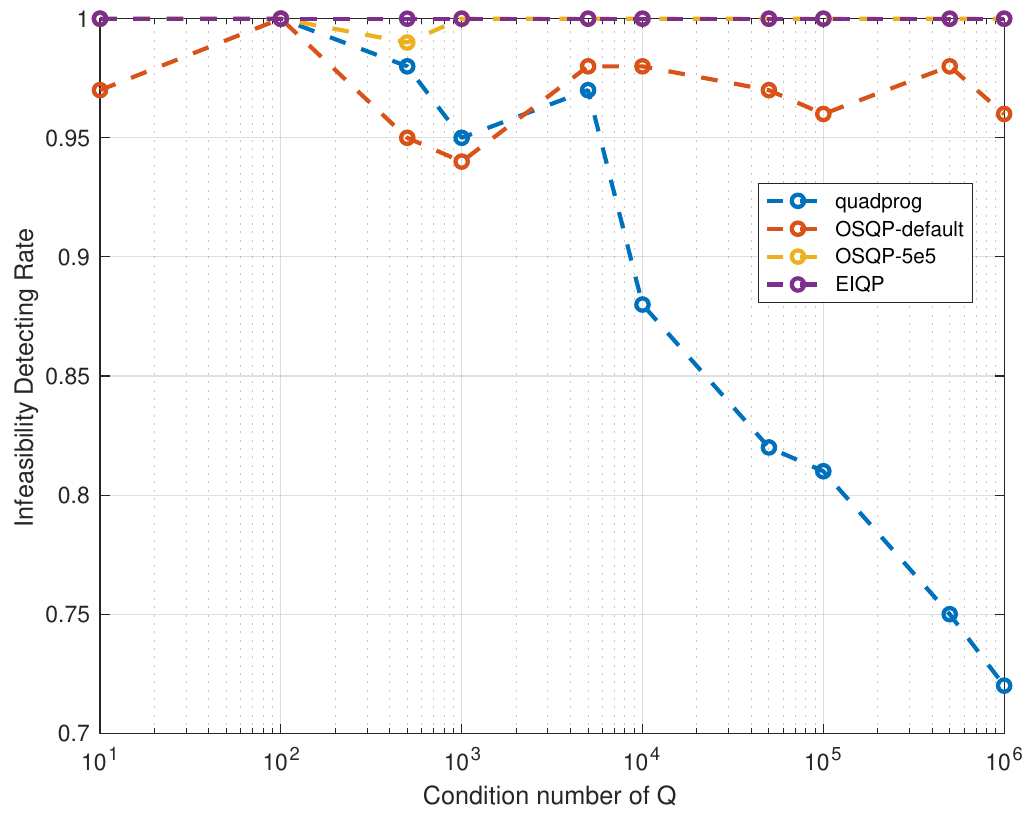} 
        
        \vspace{-0.2cm}
        \caption{Comparison of quadprog, OSQP-default, OSQP-5e5, and EIQP solvers in detecting
        infeasibility of random infeasible QP problems as a function of condition number of the Hessian matrix $Q$.}
        \label{Fig1} 
\end{figure}

\subsection{Real-time MPC applications}\label{subsection_MPC}
A general linear MPC for tracking is
\begin{equation}\label{problem_MPC_tracking}
    \begin{aligned}
        \min~&\frac{1}{2}\sum_{k=0}^{N_p-1}\left\|Cx_{k+1}-r(t))\right\|_{W_y}^{2} + \left\|u_{k} - u_{k-1}\right\|_{W_{\Delta u}}^{2} \\
        \mathrm{s.t.}~&x_{k+1}=A x_{k}+B u_{k},~k=0, \ldots, N_p-1,\\
        & E_x x_{k+1} + E_u u_k\leq f_k,~k=0, \ldots, N_p-1,\\
        &x_0 = x(t), \quad u_{-1} = u(t-1),
    \end{aligned}
\end{equation}
where $x_k\in\mathbb{R}^{n_x}$ denotes the states and $u_k\in\mathbb{R}^{n_u}$ denotes the control inputs. The control objective is to minimize the tracking errors between the output $Cx_{k+1}$ and the reference signal $r(t)$, while penalizing the control input increments $\Delta u_k$, along the prediction horizon $N_p$. The matrices $W_y\succ0$ and $W_{\Delta u}\succ0$ are the weights for the output tracking error and control input increments, respectively. $x(t)$ and $u(t-1)$ denote the current feedback states and previous control inputs, respectively. $E_x x_{k+1} + E_u u_k \leq f_k$ denotes the general constraints formulation such as the box constraints $u_{\min } \leq u_{k} \leq u_{\max}, \ x_{\min} \leq x_{k+1} \leq x_{\max}$ or the terminal constraints $x_{N_p}\in\mathcal{X}_{N_p}$. Denoting
\[
z\triangleq\mathrm{col}(u_0-u_{\min}, \cdots{}, u_{N_p-1}-u_{\min})\in\mathbb{R}^{n_z}
\]
($n_z=N_pn_u$), we can eliminate the states (see \cite{jerez2011condensed}), to get a compact QP \eqref{eqn_QP}. Here we consider the classical AFTI-16 example from \cite{KAS88,bemporad1997nonlinear}:
\[
\left\{\begin{aligned}
\dot{x} =&{\left[\begin{array}{cccc}
-0.0151 & -60.5651 & 0 & -32.174 \\
-0.0001 & -1.3411 & 0.9929 & 0 \\
0.00018 & 43.2541 & -0.86939 & 0 \\
0 & 0 & 1 & 0
\end{array}\right]} x\\&+{\left[\begin{array}{cc}
-2.516 & -13.136 \\
-0.1689 & -0.2514 \\
-17.251 & -1.5766 \\
0 & 0
\end{array}\right] }u \\
y =&{\left[\begin{array}{llll}
0 & 1 & 0 & 0 \\
0 & 0 & 0 & 1
\end{array}\right]x}
\end{aligned}\right.
\]
which is sampled using zero-order hold every 50~ms. Our MPC controller feedback time is also 50~ms, so we need to ensure that the MPC design meets this execution time requirement. The input constraints are $|u_i| \leq 25^{\circ},i = 1, 2$, the output constraints are $-0.5\leq y_1 \leq 0.5 $ and $-100 \leq y_2 \leq 100$. The control goal is to make the pitch angle $y_2$ track a reference signal $r_2$. In designing the MPC controller, we take $W_y = \diag$([10,10]), $W_u = 0$, and $W_{\Delta u}= \diag$([0.1, 0.1]). 

Now, the remaining MPC setting is to choose the prediction horizon to meet the requirement that the execution time is smaller than the feedback time 50~ms and maintain good closed-loop performance. 

We test our EIQP solver ($\epsilon=10^{-8}$) for different prediction horizon settings: $N_p=5,10,15,20,25$, and their execution times, MPC closed-loop cost (average), and MPC constraint violations are listed in Table \ref{tab1}. For the prediction horizon $N_p \leq 20$, the execution time of our EIQP is less than the feedback time of $50$ ms. As the prediction horizon $N_p$ increases, the MPC tracking costs decrease but the MPC gradually violates the constraints in $N_p\geq15$ settings, this is because as the prediction horizon $N_p$ increases, the resulted QPs become gradually ill-conditioning, and thus requires smaller optimality level, which can reduce the constraint violations to 0 but increase the execution time. In our given processor, $N_p=10$ should be chosen. The closed-loop simulation results for $N_p=10$ are plotted in Fig.\ \ref{Fig2}, which shows that the pitch angle correctly tracks the reference signal from $0^{\circ}$ to $10^{\circ}$ and then back to $0^{\circ}$, and that both the input and output constraints are satisfied.
\begin{table}[!htbp]
\caption{EIQP ($\epsilon=10^{-8}$) computational performance for MPC with different prediction horizon $(N_p=5,10,15,20,25)$ settings}

\vspace{-0.2cm}

\centering
\begin{tabular}{ccccc}
\toprule
Prediction  & Execution  & Less than  & MPC & MPC constraint\\
horizon &  time [ms] & 50 [ms]? & costs (avg) &  violations (avg)
\\\midrule
$N_p=5$ & 1.6 & \tikzcmark  & 42.6218 & 0.0\\
$N_p=10$ & 8.7 & \tikzcmark  & 42.5558 & 0.0\\
$N_p=15$ & 22.8 & \tikzcmark & 42.5336 & \textbf{0.0001}\\
$N_p=20$ & 47.6 & \tikzcmark & 42.4517 & \textbf{0.0009}\\
$N_p=25$ & \textbf{86.3} & \tikzxmark & 41.4774 & \textbf{0.0103}\\
\bottomrule
\end{tabular}
\label{tab1}
\end{table}

\begin{figure}[!ht]
        \hspace*{-0em}\includegraphics[width=1\columnwidth]{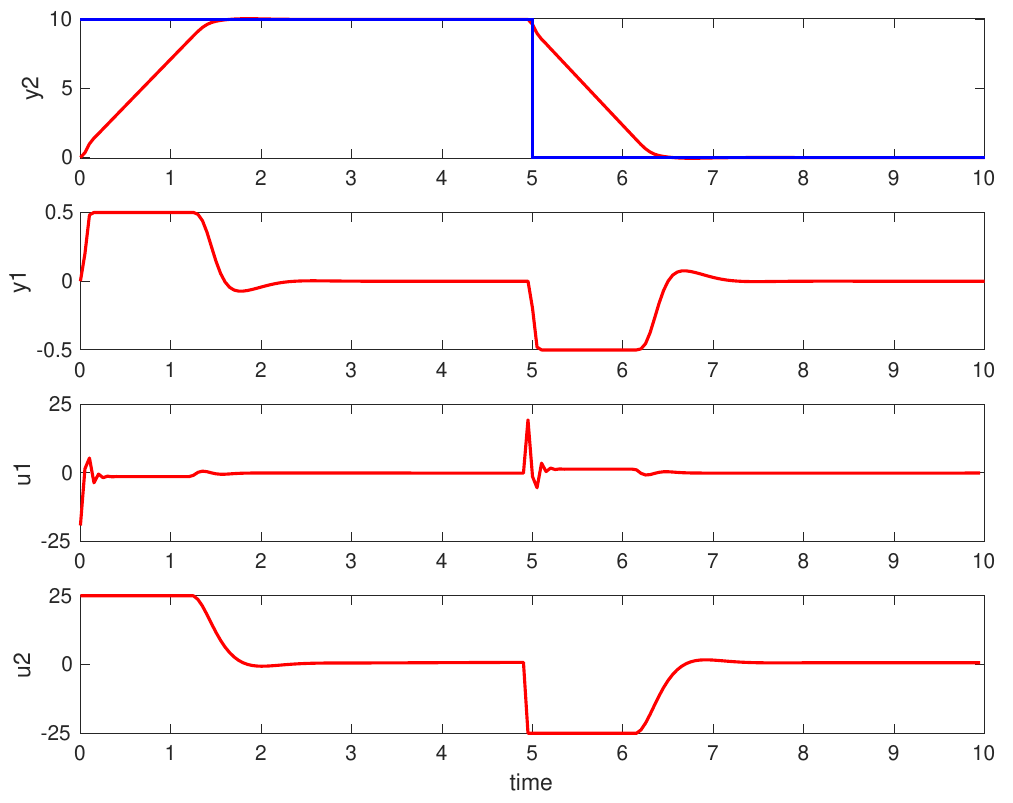} 
        
        \vspace{-0.3cm}
        \caption{Closed-loop performance of AFTI-16 under the MPC prediction horizon $N_p=10$ setting and EIQP ($\epsilon=10^{-8}$).}
        \label{Fig2} 
\end{figure}

\subsection{Real-time CLF-HOCBF-QP applications}\label{subsection_CBF_QP}

This subsection uses of EIQP to solve High-Order Control Barrier Function (HOCBF)-based QP \cite{xiao2021high}, and compares it to the interior point and active set algorithms in QuadProg.

HOCBF has been shown to be capable of transforming nonlinear optimizations for nonlinear systems and constraints into QPs with formal guarantees \cite{xiao2023safe}. We consider the Adaptive Cruise Control (ACC) problem with nonlinear vehicle dynamics \cite{Khalil2002}
\begin{equation}\label{eqn:vehicle}
\underbrace{\left[\begin{array}{c} 
	\dot z\\
	\dot v
	\end{array} \right]}_{ \dot {\bm x}}=
\underbrace{\left[\begin{array}{c}  
	v_p - v\\
	-\frac{1}{M}F_{r}(v)
	\end{array} \right]}_{f(\bm x)} + 
\underbrace{\left[\begin{array}{c}  
	0\\
	\frac{1}{M}
	\end{array} \right]}_{g(\bm x)}u
\end{equation}
where $z$ denotes the distance between the ego vehicle and its preceding vehicle; $v, v_p$ denote the speed of the ego and its preceding vehicle, respectively; $u$ is the control (acceleration) of the ego vehicle; $M$ is the mass of the ego vehicle, and $F_r(v) = f_0\mathrm{sgn}(v) + f_1v + f_2 v^2$, where $f_0 > 0$, $f_1 > 0$, and $f_2 > 0$ are 
scalars. The first term in $F_{r}(v(t))$ denotes the coulomb friction force, the second term denotes the viscous friction force, and the last term denotes the aerodynamic drag.

$\mathbf{Constraint\
 1}$ (Control bounds): There are constraints
on control for the ego vehicle, i.e.,
\begin{equation}\label{eqn:limitation}%
\begin{aligned}
-c_dMg\leq u\leq c_aMg,  \end{aligned} 
\end{equation}
where $c_d>0$ and $c_a > 0$ are deceleration and
acceleration coefficients, respectively, and $g$ is the gravity constant.

$\mathbf{Constraint\ 2}$ (Safety constraint): The distance $z$ is required to satisfy
\begin{equation}\label{eqn:safetyACC}%
z\geq \delta, 
\end{equation}
where $\delta>0$ is determined by the length of the two vehicles.

$\mathbf{Objective\ 1}$ (Desired speed): The ego vehicle always attempts to achieve a desired speed $v_d > 0$.

$\mathbf{Objective\ 2}$ (Minimum control effort): The control input effort of the ego vehicle,
\begin{equation}\label{eqn:energy}
J_i(u_i(t))=\int_0^T\! \left(\frac{u(t) - F_{r}(v(t))}{M}\right)^{\!\!2}dt,
\end{equation}
should be minimized, 
where $T > 0$ is the final time.

\textbf{Problem:}
The ACC problem is to determine control laws to achieve Objectives 1 and 2 subject to Constraints 1 and 2 for the ego vehicle governed by dynamics (\ref{eqn:vehicle}).

We use the Control Lyapunov Function (CLF) \cite{ames2014rapidly} to achieve the desired speed and use the HOCBF to enforce the safety constraint (\ref{eqn:safetyACC}). Then the ACC problem can be transformed into the QP that is solved at every time step:
\begin{equation} \label{eqn:objACC}
\bm u^*(t) = \arg\min_{\bm u(t)} \frac{1}{2}\bm u(t)^\top H\bm u(t) + F^\top \bm u(t)
\end{equation}
\[
\bm u(t)\! =\! \left[\begin{array}{c}
\! u(t)\!\\
\!\delta_{acc}(t)\!
\end{array} \right]\!,
H\! =\! \left[\begin{array}{cc} 
\frac{2}{M^2} & 0\\
0 & 2p_{acc}
\end{array} \right]\!, F\! =\!  \left[\begin{array}{c} 
\!\frac{-2F_r(v(t))}{M^2}\!\\
0
\end{array} \right]. 
\]
subject to
$$
\begin{aligned}
L_f^2b(\bm x) + LgL_f b(\bm x) + 2L_fb(\bm x) + b(\bm x) \geq 0,\\
L_fV(\bm x) + L_gV(\bm x) + \epsilon_{acc} V(\bm x) \leq \delta_{acc},\\
-c_dMg\leq u\leq c_aMg,  
\end{aligned}
$$
where $b(\bm x) = z - \delta$ is the HOCBF, $V(\bm x) = (v - v_d)^2$ is the CLF; $L_f, L_g$ denotes the Lie derivative along $f, g$, respectively; $\epsilon_{acc} > 0$; and $\delta_{acc}$ is a relaxation variable that ensures the CLF will not conflict with the HOCBF constraint.

 The simulation parameters are listed in Table \ref{table:param}. The initial conditions are $z(0) = 100$\,m, $v(0) = 20$\,m/s, and $v_p(t) = 13.89\, \mathrm{m/s},\  \forall t\in[0, T]$.

\begin{table}
\caption{Simulation parameters for ACC}\label{table:param}
    \vspace{-0.4cm}
    
	\begin{center}
	\begin{tabular}{|c||c||c|}
			\hline
Parameter & Value & Units\\
			\hline
			\hline
			$\delta$ & 10& m\\
            \hline
			$v_d$ & 24& m/s\\
			\hline
			$M$ & 1650& kg\\
			\hline
			g & 9.81& m/s$^2$\\
			\hline
			$f_0$ & 0.1& N\\
			\hline
			$f_1$ & 5& Ns/m\\
			\hline
			$f_2$ & 0.25& Ns$^2$/m\\
			\hline
			$\epsilon_{acc}$ & 10& unitless\\
			\hline
			$c_a, c_d$ & 0.4& unitless\\
			\hline
			$p_{acc}$ & 1& unitless\\
			\hline
		\end{tabular}
	\end{center}
	
\end{table}

\textbf{Simulation results.} The simulation results are given in Table \ref{tab2} and Fig.\  \ref{Fig3}. The proposed EIQP is around 5 times faster than the QuadProg in Matlab, while being able to detect the infeasible case when $c_d$ decreases to 0.375. All the solvers can ensure the correctness of the solutions and guarantee the safety of the ACC problem, as shown by $b(\bm x(t))\geq 0$ for all $t\in[0,T]$ in Fig.\  \ref{Fig3}. In our processor, EIQP certifies that its execution time is below 0.075 [ms], enabling our CLF-HOCBF-QP controller to operate at a maximum frequency of 13.3 kHz. Moreover, the execution time of EIQP solver has the smallest standard deviation, which is consistent with our theoretical result: EIQP performs \textit{exact} and \textit{fixed} iterations.

\begin{table}[!htbp]
\caption{EIQP comparison with QuadProg (interior point, active set) in CLF-HOCBF-QP}
\centering
\begin{tabular}{ccccc}
\toprule
Method  & Execution time & HOCBF  & Infeasibility\\
 &  [ms] & $b(\bm x(T))$ & detected?
\\\midrule
Active set & 0.297$\pm 0.252$ & 2.010e-7  & \tikzxmark\\
Interior point & 0.361$\pm 0.196$ & 1.825e-7  & \tikzcmark\\
EIQP & \textbf{0.065}$\pm$ \textbf{0.008} & 1.812e-7 & \tikzcmark\\
\bottomrule
\end{tabular}
\label{tab2}
\end{table}

\begin{figure}[!ht]
        \hspace*{-0em}\includegraphics[width=1\columnwidth]{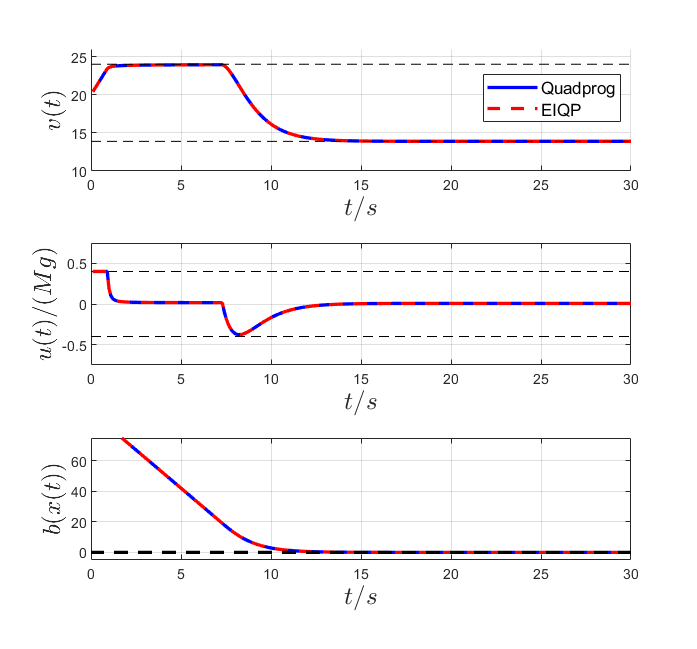} 

        \vspace{-0.7cm}
        
        \caption{State profiles and HOCBFs for QuadProg and EIQP. $b(\bm x(t))\geq 0$ denotes safety guarantees. }
        \label{Fig3} 
\end{figure}

\section{Conclusion}
This article proposes an execution-time-certified and infeasibility-detecting solver for convex QP (including LP), which is based on an infeasible IPM algorithm. The homogeneous formulation of convex QP's KKT condition is adopted to achieve the infeasibility-detecting capability. Moreover, by exploiting this homogeneous formulation, our proposed infeasible IPM algorithm can achieve the best theoretical $O(\sqrt{n})$ iteration complexity that feasible IPM algorithms enjoy. The iteration complexity is proved to be \textit{exact} (rather than an upper bound), \textit{simple to calculate}, and \textit{data independent}, with the value $\left\lceil\frac{\log(\frac{n+1}{\epsilon})}{-\log(1-\frac{0.414213}{\sqrt{n+1}})}\right\rceil$, making it appealing to certify the execution time of online time-varying convex QPs and LPs. 

The homogeneous formulation comes at the cost of ill-conditioned linear systems to be solved; we achieved numerical robustness in our current implementation by employing LU decomposition. Future work includes: \textit{i)} developing a good preconditioner for solving the linear system at each iteration, to enable the faster decomposition to be used; \textit{ii)} taking advantage of the full Newton step feature to parallelize the predetermined sequential iterations, which can greatly improve the computational efficiency.

\section{Acknowledgement}
This research was supported by the U.S. Food and Drug Administration under the FDA BAA-22-00123 program, Award Number 75F40122C00200.

\bibliographystyle{IEEEtran}
\bibliography{ref} 

\begin{IEEEbiography}[{\includegraphics[width=1in,height=1.25in,clip,keepaspectratio]{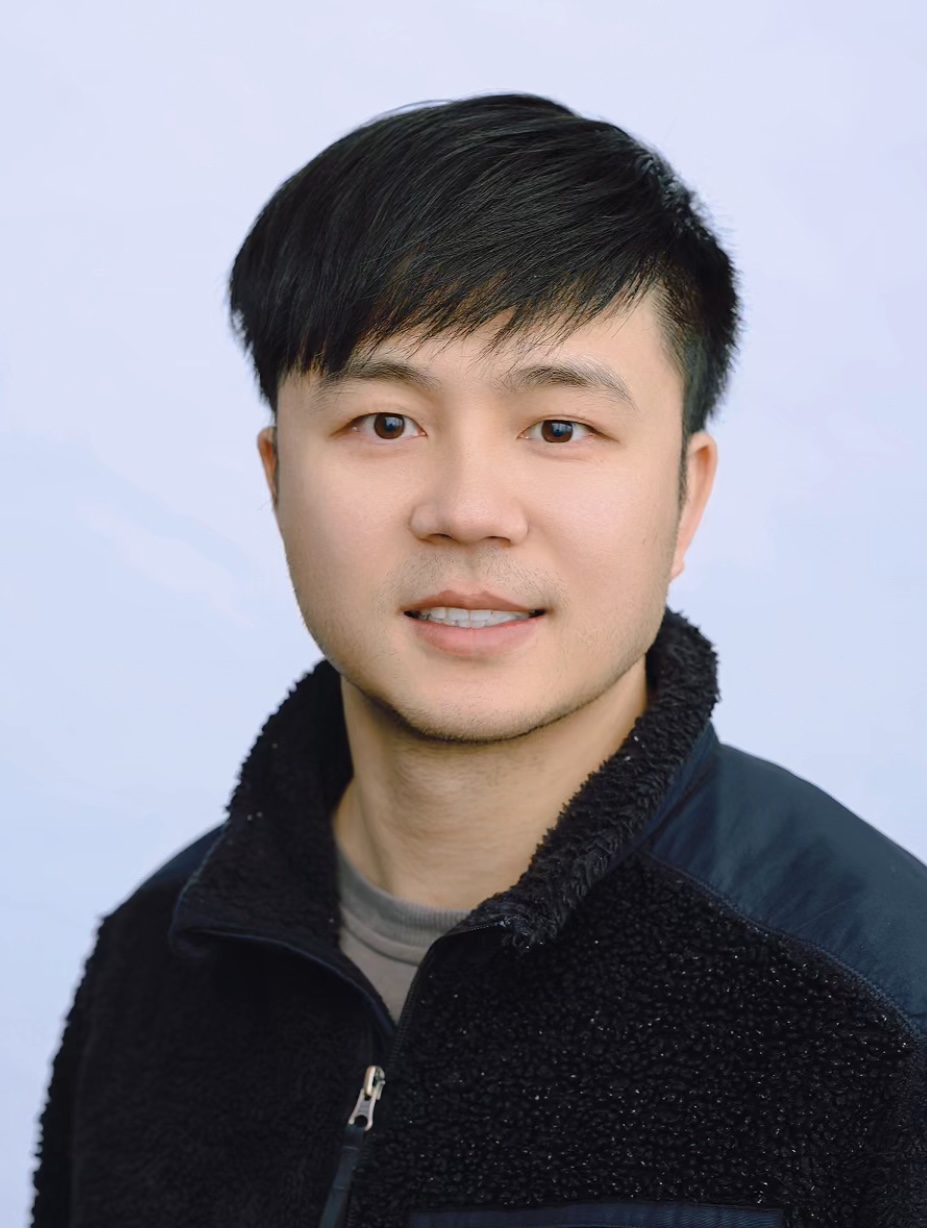}}]{Liang Wu}
received a B.E. and M.E. in chemical engineering, with majors in computational fluid dynamics and process modeling. He then turned his research interest to model predictive control and received his Ph.D. in 2023 from the IMT School for Advanced Studies Lucca supervised by Prof.\ Alberto Bemporad. 

He is now a postdoctoral associate at the Massachusetts Institute of Technology (MIT) supervised by Prof.\  Richard D. Braatz since April 2023. His research interests include quadratic programming, model predictive control, convex nonlinear programming, and control of PDE systems.
\end{IEEEbiography}

\begin{IEEEbiography}[{\includegraphics[width=1in,height=1.25in,clip,keepaspectratio]{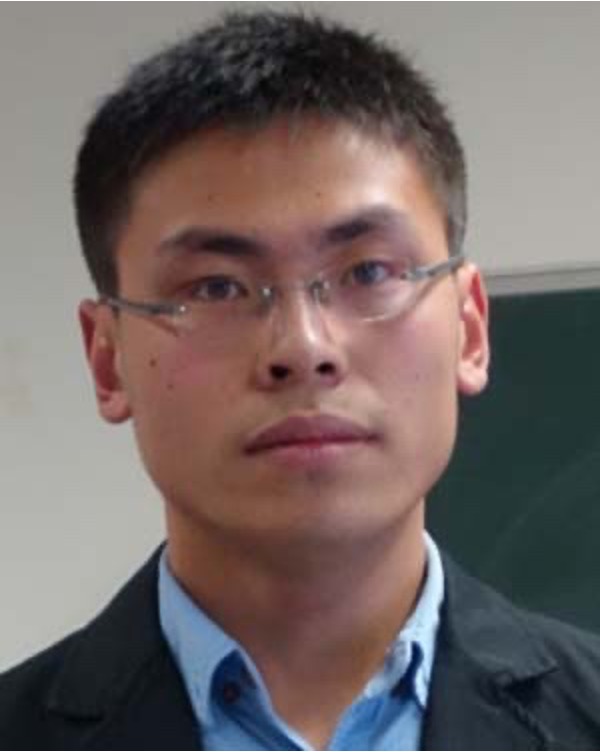}}]{Wei Xiao} (Member, IEEE) received the B.Sc. 
in mechanical engineering and automation from the
University of Science and Technology Beijing, China in 2013, the M.Sc. in robotics from the Chinese Academy of Sciences (Institute of Automation), Beijing, China in 2016, and the Ph.D. in systems engineering from Boston University, MA, USA, in  2021. He is currently a Postdoctoral Associate with the Massachusetts Institute of Technology, Cambridge, MA, USA. His current research interests include control theory and machine learning, with particular emphasis on robotics and traffic
control.

Dr. Xiao was the recipient of an Outstanding Student Paper Award at the 2020
IEEE Conference on Decision and Control.
\end{IEEEbiography}

\begin{IEEEbiography}[{\includegraphics[width=1in,height=1.25in,clip,keepaspectratio]{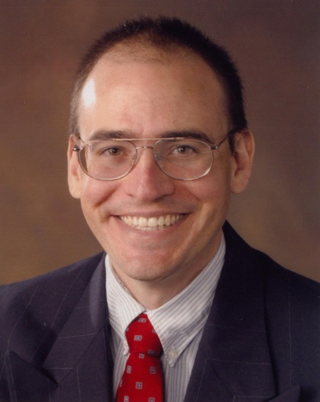}}]{Richard D. Braatz}
is the Edwin R. Gilliland Professor at the Massachusetts Institute of Technology (MIT). He received an M.S. and
Ph.D. from the California Institute of Technology. He is a past Editor-in-Chief of the IEEE Control Systems Magazine, Past President of the American Automatic Control Council, and General Co-Chair of the 2020 IEEE Conference on Decision and Control. Honors include the AACC Donald P. Eckman Award, the Antonio Ruberti Young Researcher
Prize, and best paper awards from IEEE- and IFAC-sponsored control journals. He is a Fellow of IEEE and IFAC and a member
of the U.S. National Academy of Engineering.
\end{IEEEbiography}
\end{document}